\renewcommand\footnotetextcopyrightpermission[1]{} 
  \providecommand\BibTeX{{%
    \normalfont B\kern-0.5em{\scshape i\kern-0.25em b}\kern-0.8em\TeX}}}
\newif\ifdraft
\newcommand\dia{\lozenge}
\newcommand\sqand{\sqcap}
\newcommand\sqor{\sqcup}
\newcommand\boxp[1]{[#1]}
\newcommand\diap[1]{\langle #1\rangle}
\newcommand\hp[1]{\llbracket #1\rrbracket}
\newcommand\first[1]{{\sf first}\,#1}
\newcommand\sta[1]{{\sf Sta}\,#1}
\newcommand\tra[1]{{\sf Tra}\,#1}
\newcommand\var[1]{{\sf Var}\,#1}
\newcommand\last[1]{{\sf last}\,#1}
\newcommand\true[1]{{\sf true}\,#1}
\newcommand\false[1]{{\sf false}\,#1}
\newcommand\stdl{STd$\mathcal{L}$\xspace}
\newcommand\dl{d$\mathcal{L}$\xspace}
\begin{document}

\title{A Program Logic to Verify Signal Temporal Logic Specifications of Hybrid Systems: Extended Technical Report}
\author{Hammad Ahmad}
\email{hammada@umich.edu}
\affiliation{%
  \institution{University of Michigan, Ann Arbor}
  \city{Ann Arbor}
  \state{Michigan}
  \postcode{48109}
}

\author{Jean-Baptiste Jeannin}
\email{jeannin@umich.edu}
\affiliation{%
  \institution{University of Michigan, Ann Arbor}
  \city{Ann Arbor}
  \state{Michigan}
  \postcode{48109}
}


\begin{abstract}
Signal temporal logic (STL) was introduced for monitoring temporal properties of continuous-time signals for continuous and hybrid systems. Differential dynamic logic (\dl) was introduced to reason about the end states of a hybrid program. Over the past decade, STL and its variants have significantly gained in popularity in the industry for monitoring purposes, while \dl has gained in popularity for verification of hybrid systems. In this paper, we bridge the gap between the two different logics by introducing signal temporal dynamic logic (\stdl) -- a dynamic logic that reasons about a subset of STL specifications over executions of hybrid systems. Our work demonstrates that STL can be used for deductive verification of hybrid systems. \stdl significantly augments the expressiveness of \dl by allowing reasoning about temporal properties in given time intervals. We provide a semantics and a proof calculus for \stdl, along with a proof of soundness and relative completeness.\footnote{This technical report is an extended version with detailed proofs of the paper ``A Program Logic to Verify Signal Temporal Logic Specifications of Hybrid Systems'' that appeared at HSCC 2021\cite{ahmad2021stdl}.}
\end{abstract}

\maketitle
\pagestyle{plain} 


\section{Introduction} \label{sec:intro}
Recent technological advances have made our transportation, manufacturing and communication facilities safer, cheaper, and more reliable.
However, they have also increased our reliance on computer systems modeling and controlling objects of our physical world.
Prime examples of such objects include cars on our roads, robots in our manufacturing plants, and satellites orbiting our planet. Such systems, referred to as \textit{cyber-physical systems} (CPSs)~\cite{rajkumar2010cyber}, often fall under the category of \textit{hybrid systems}: their programmable controllers typically exhibit discrete behavior, while the laws of physics that the systems are restricted by are continuous in nature. 

The prevalence of hybrid systems around us, coupled with our increased reliance on these systems, has necessitated further exploration of reasoning about such systems. This process involves reasoning about the \emph{states} of the hybrid system. A state is considered \textit{safe} if it does not violate any safety property of the system, and considered \textit{live} if the system can make some useful progress from that state. Verifying a system guarantees safety and liveness in the system. Signal temporal logic (STL)~\cite{maler2004monitoring,maler2013monitoring} was introduced to monitor properties over continuous-time signals of continuous and hybrid systems in given time intervals, and has since been used primarily for monitoring purposes. Dynamic logic~\cite{harel2001dynamic} was introduced as a formal system for reasoning about programs. Differential dynamic logic (d$\mathcal{L}$)~\cite{platzer2008differential} was built on top of dynamic logic to reason about the end states of a hybrid program, to ensure that the end state is a safe state. However, a hybrid system that is in a safe state at the end of a program's execution may not have been in a safe state throughout the program's execution: it is possible for a safety property to be violated during the execution of a program and be held at the termination of the program. 
Therefore, it is vital to verify that hybrid systems are safe during execution in addition to being safe upon termination. Differential temporal dynamic logic (dTL)~\cite{Platzer2010} and differential temporal dynamic logic with nested temporalities (dTL\textsuperscript{2})~\cite{jeannin2014dtl,CMU-CS-14-109} use both dynamic logic -- to reason about all possible executions of a program -- and a fragment linear temporal logic (LTL) -- to reason about intermediate states of each execution -- to tackle this challenge.

While dTL and dTL\textsuperscript{2} are able to reason about intermediate states of a hybrid system during program execution, the logics are still unable to reason about intermediate states of a system in given time intervals. This is a major limitation of the logics, since such reasoning abilities can be crucial in ensuring safety of a hybrid system (e.g., ensuring that a self-driving car applies its brakes within $x$ seconds of spotting a stop sign, as opposed to ensuring that the car applies its brakes \emph{eventually} after spotting a stop sign). STL is able to prove properties about a system in given time intervals, but the logic reasons about only \emph{one} execution of a system, not all possible executions. Therefore, using STL alone to reason about safety in hybrid systems is not sufficient.

In this paper, we present signal temporal dynamic logic (\stdl), a logic that integrates a fragment of STL with differential dynamic logic (d$\mathcal{L}$) to reason both about the intermediate states of a hybrid system \textit{in given time intervals}, and about the final states of the system. This reasoning is enabled by our use of STL, which natively supports formulas of the form $\square_{[a,b]}\phi$ (i.e., for all times between $t+a$ and $t+b$, where $t$ is the current time, the property $\phi$ is true) and $\dia_{[a,b]}\phi$ (i.e., there exists a time between $t+a$ and $t+b$ such that the property $\phi$ is true), but has historically been used mainly for monitoring purposes. We show that STL can be used for full deductive reasoning of hybrid systems.

The main contributions of this work are as follows:
\begin{itemize}
    \item We introduce \stdl -- a logic that reasons about STL formulas for the first time in the context of d$\mathcal{L}$, bringing together results from two different communities with little overlap into a common framework. 
    \item We introduce a notion of timing hybrid programs to bridge the gap between d$\mathcal{L}$ and STL for verification purposes.
    \item We provide a semantics for \stdl and sound proof calculus for the logic, along with a proof of soundness and relative completeness.
\end{itemize}

The rest of the paper is organized as follows. Section~\ref{sec:motivation} motivates \stdl by introducing a running example of a use-case from the industry highlighting the power of the logic. Section~\ref{sec:stdl} introduces the syntax and semantics of \stdl. Section~\ref{sec:proofcalc} motivates the concept of normalization of trace formulas in \stdl and presents the proof system of \stdl. Section~\ref{sec:future} discusses future directions for \stdl. Section~\ref{sec:relatedwork} outlines some related work, and Section~\ref{sec:conclusion} parts with concluding thoughts.
\section{Motivation and Running Example}\label{sec:motivation}
Throughout this paper, we use a simplified example of a use-case for \stdl inspired by industry. As we note in Section~\ref{sec:intro}, a major limitation of the program logics preceding \stdl is their inability to reason about temporal properties in specified time intervals. Such reasoning abilities can be crucial in verifying a hybrid system. While STL is able to handle formulas specifying properties in given time intervals, the logic is only able to prove properties about \emph{one} execution of a hybrid system, and not all possible executions; we need to be able to reason about every execution of a hybrid system to be able to claim with certainty correctness of the system. As such, none of differential dynamic logic, differential temporal dynamic logic, or signal temporal logic -- or other variants of these logics -- alone is sufficient to reason about safety and liveness in hybrid systems.

To see why, let us examine a simplified version of traction assist from the automobile industry. Consider a car with some accelerator input and braking force cruising on the road. The car has a signal that streams a binary value corresponding to whether or not the car's sensors detect that the car is skidding or losing traction, and a Boolean flag corresponding to whether or not the vehicle's traction control is engaged. For simplicity, assume the accelerator input can have a positive or negative value corresponding to acceleration and deceleration respectively, or a value of zero corresponding to no acceleration. Assume further that the braking force is a non-negative integer. Let the wheel rotation of the car's wheels evolve according to some differential equation. The car has a safety property requiring that in the event that the car is skidding, a vehicle traction assist program executes to help gain traction again and slow down the wheel spin to stop the skidding, following which the car can accelerate again. According to the safety property, after running the traction assist program, the car's traction control should turn on and within 1 to 5 seconds, the car's wheel rotation should fall to under some threshold value (to help regain control). 

As we introduce key concepts in the following sections, we also present the differential equation, the hybrid program, and the safety property for the car in \stdl. We present a proof sketch of the safety property using the \stdl calculus. We note that safety properties of this class (i.e., containing temporal references for specified time intervals) are expressible directly in \stdl (but, to the best of our knowledge, not in any other logic preceding \stdl), and remain crucial in verifying correctness of hybrid systems.

\section{Signal Temporal Dynamic Logic}\label{sec:stdl}

This section formally defines the syntax and semantics of hybrid programs and state and trace formulas in \stdl. We take special care to ensure that \stdl is a conservative extension of d$\mathcal{L}$, i.e. the non-temporal aspects of the state semantics for \stdl are equivalent to the non-temporal transition semantics of d$\mathcal{L}$ (Definition 5 in \cite{platzer2008differential}). 

\subsection{Hybrid Programs}\label{subsec:hp}
We use \textit{hybrid programs} to model hybrid systems in our work. A hybrid program $\alpha, \beta$ could be a discrete assignment ($x:= \theta$), a test ($?\chi$), an ordinary differential equation ($x' = \theta \ \& \ \chi$), a non-deterministic choice ($\alpha \cup \beta$), a sequential composition ($\alpha; \beta$), or a non-deterministic finite repetition ($\alpha^*$). As in d$\mathcal{L}$, a term $\theta$ can be any polynomial with a rational coefficient, and a condition $\chi$ can be any first-order formula of real arithmetic.

The syntax of hybrid programs can be summarized as:
$$\alpha, \beta ::= x:= \theta \ | \ ?\chi \ | \ x' = \theta \ \& \ \chi \ | \ \alpha \cup \beta \ | \ \alpha; \beta \ | \ \alpha^*$$

For the semantics of hybrid programs in \stdl, the set of \textit{states} $\sta$ is the set of functions from variables to $\mathbb{R}$. A special state $\Lambda \notin \sta$ denotes a failure state for the hybrid system. The \textit{trace semantics} of a hybrid program $\alpha$ assign a set of traces $\hp{\alpha}$ to the program. For $v \in \sta \cup \{\Lambda\}$, we express the function $\sigma : [0,0] \to \{v\}, 0 \mapsto v$ using $\hat{v}$, and $\hat{v}$ is defined only on the singleton interval [0,0].\footnote{We often informally refer to a trace defined on a singleton interval $\{i\}$, e.g. $(\hat{v})$, as a trace that executes in zero time.} 
A trace, then, is a non-empty, finite sequence $\sigma = (\sigma_0, \sigma_1, \dots, \sigma_n)$ of subtraces $\sigma_i$. For $0\leq i<n$, the piece $\sigma_i$ is a function $\sigma_i : [r_{i-1}, r_i] \to \sta$, with the convention $r_{-1} = 0$, where $r_i - r_{i-1}$ is the duration of this step and $r_i \geq r_{i-1}$. Where $i=n$, $\sigma_n$ can be defined as:
\begin{itemize}
    \item $\sigma_n : [r_{n-1}, r_n] \to \sta$, in which case we refer to $\sigma$ as a \textit{terminating} trace;
    \item $\sigma_n : [r_{n-1}, +\infty) \to \sta$, in which case we refer to $\sigma$ as an \textit{infinite} trace;
    \item $\sigma_n : [r_{n-1}, r_{n-1}] \to \{\Lambda\}$ with $\sigma(r_{n-1}) = \Lambda$, for $n\geq 1$, in which case we refer $\sigma$ as an \textit{error} trace. $n\geq 1$ ensures that $(\hat{\Lambda})$ is not considered as a trace.
\end{itemize}
For a trace $\sigma = (\sigma_0, \dots, \sigma_n)$, we define a \textit{position} of $\sigma$ as a pair $(i, t)$  such that $0\leq i < n$ and $t$ is in the domain of definition of $\sigma_i$. We write $\sigma_i(t)$ to refer to the state of $\sigma$ at $(i, t)$, i.e. $\sigma(i,t) = \sigma_i(t)$, and define the domain of $\sigma$ as:
$${\sf dom}(\sigma) = \bigcup_{i=0}^n \left(\bigcup_{t \in {\sf dom(\sigma_i)}}(i,t)\right)$$
We can now define the lengths of traces of hybrid programs.
\begin{definition}[Length of traces of hybrid programs]\label{def:tracelength}
The length of a trace $\sigma = (\sigma_0, \sigma_1, \dots, \sigma_n) \in \hp{\alpha}$, denoted by $|\sigma|\in\mathbb{R}_+ \cup \{+\infty\}$, is defined inductively as follows:
\begin{itemize}
    \item $|\sigma| = r_n$ if $\sigma_n : [r_{n-1}, r_n] \to \sta$;
    \item $|\sigma| = +\infty$ if $\sigma_n : [r_{n-1}, +\infty) \to \sta$;
    \item $|\sigma| = r_{n-1}$ if $\sigma_n : [r_{n-1}, r_{n-1}] \to \{\Lambda\}$. 
\end{itemize}
\end{definition}

The set of all \textit{traces} of a hybrid program is referred to as $\tra$, and we collectively refer to infinite traces and error traces as non-terminating traces. For a trace $\sigma$, we refer to the state $\sigma_0(0)$ as $\first\sigma$, and we often say that ``$\sigma$ starts with $v$" if $\first\sigma=v$. Likewise, for a finite trace $\sigma$, if $\sigma$ terminates in a non-error state, we refer to the state $\sigma_n(r_n)$ as $\last\sigma$; otherwise, we refer to the state $\Lambda$ as $\last\sigma$. Note that for any trace $\sigma$, $\first\sigma$ is always well-defined, but $\last\sigma$ may not be (since infinite traces have no last state). The value of term $\theta$ in state $v$ is denoted by $val(v,\theta)$, and the valuation assigning variable $x$ to $r \in \mathbb{R}$ while matching with $v$ on all other variables is denoted by $v[x \mapsto r]$. If a state $v$ satisfies some condition $\chi$, we write $v \vDash \chi$; if $v$ does not satisfy condition $\chi$, we write $v \nvDash \chi$. Finally, given a trace $\sigma$ and an $x \in \mathbb{R}$, we use the notation ${\sf dom}(\sigma) \oplus x$ to denote the domain of $\sigma$ shifted by a value of $+x$. For example, if ${\sf dom}(\sigma)=[a,b]$, then ${\sf dom}(\sigma) \oplus x=[a+x,b+x]$.

\begin{definition}[Trace semantics of hybrid programs]\label{def:hybridprograms}
The trace semantics $\hp{\alpha}$ of a hybrid program $\alpha$ is defined as follows: 
\begin{itemize}
    \item $\hp{x:=\theta} = \{(\hat{v}, \hat{w}) \ | \ w = v[x \mapsto val(v,\theta)]\}$;
    \item $\hp{x' = \theta \ \& \ \chi} = \{(\sigma) : \sigma$ is a state flow of order 1~\cite{platzer2008differential} defined on $[0,r]$ or $[0,+\infty)$ solution of $x' = \theta$, and for all $t$ in its definition domain, $\sigma(t) \vDash \chi\} \cup \{(\hat{v}, \hat{\Lambda}) \ | \ v \nvDash \chi\}$;
    \item $\hp{?\chi} = \{(\hat{v}) \ | \ v\vDash\chi\} \cup \{(\hat{v}, \hat{\Lambda}) \ | \ v \nvDash \chi\}$;
    \item $\hp{\alpha \cup \beta} = \hp{\alpha} \cup \hp{\beta}$;
    \item $\hp{\alpha;\beta} = \{\sigma \circ \rho \ | \ \sigma \in \hp{\alpha}, \rho \in \hp{\beta}$ when $\sigma \circ \rho$ is defined$\}$, where the composition $\sigma \circ \rho$ of $\sigma = (\sigma_0, \dots, \sigma_n)$ and $\rho = (\rho_0, \dots, \rho_m)$ is
    \begin{itemize}
    \item $\sigma \circ \rho = (\sigma_0, \dots, \sigma_n, \bar{\rho}_0, \dots, \bar{\rho}_m)$ if $\sigma$ terminates and $\last\sigma=\first\rho$, where $\bar{\rho} = (\bar{\rho}_0, \dots, \bar{\rho}_m)$ is a trace with ${\sf dom}(\bar{\rho}) = {\sf dom}(\rho) \oplus |\sigma|$ and for each $i \in \{0\dots m\}$, for each $t \in {\sf dom}(\rho_i)$, $\bar{\rho}_i(t) = \rho_i(t - |\sigma|)$,\footnote{Informally, $\bar{\rho}$ is merely the trace $\rho$ shifted to the right by a value of $+|\sigma|$.}
    \item $\sigma$ if $\sigma$ does not terminate,
    \item undefined otherwise;
    \end{itemize}
    \item $\hp{\alpha^*} = \bigcup_{n \in \mathbb{N}}\hp{\alpha^n},$ where $\alpha^0$ is defined as ?true, $\alpha^1$ is defined as $\alpha$, and $\alpha^{n+1}$ is defined as $\alpha^n; \alpha$ for $n\geq1$.
\end{itemize}
\end{definition}
These semantics for hybrid programs are adopted from dTL\textsuperscript{2}~\cite{jeannin2014dtl}. As in dTL\textsuperscript{2}, an important property of the trace semantics of hybrid programs is that for any hybrid program $\alpha$ and state $v$, there always exists a trace $\sigma \in \hp{\alpha}$ such that $\first\sigma=v$ (even if $\sigma$ is an error trace). A key difference between the semantics of dTL\textsuperscript{2} and our work is that for a trace $\sigma = (\sigma_0, \dots, \sigma_n)$, while the former define the domain of each $\sigma_i$ from 0 to $r_i$, we define the domain of each $\sigma_i$ from $r_{i-1}$ to $r_i$, to enable easier reasoning about temporal formulas in given time intervals. As such, our semantics for the composition $\sigma \circ \rho$ between traces $\sigma$ and $\rho$ requires trace $\rho$ to be shifted in time by a value of $+|\sigma|$. 

\subsubsection{Running Example: Traction Assist in Cars} Having introduced the semantics of hybrid programs in \stdl, we now formally specify a simplified version of the differential equation ${\sf cruise}$ that varies the car's wheel rotation $\rho$. For $\omega$ the acceleration of the car, $\varphi$ the braking force applied to each of the car's wheels, and some positive constants $k$ and $j$, we have
$${\sf cruise}(\omega, \varphi) ::= \omega \times k - \varphi \times j$$ 
Note that in practice, each of the car's wheels could have a different wheel rotation and braking force. For the sake of simplicity, and to avoid presenting four separate proofs for this example, we assume that each wheel has the same rotation and braking force.

A very simple version of the hybrid program ${\sf traction\_assist}$ can then take the form
\begin{align}
    {\sf traction\_assist} ::=&\ ( ?({\sf no\_traction}); \\
    &\ {\sf traction\_control := 1};\\ 
    &\ \omega := -1;\ \varphi:= 10;\ \rho' = {\sf cruise(\omega, \varphi)});
\end{align}
where the signal ${\sf no\_traction}$ is a binary value of ${\sf true}$ or ${\sf false}$ corresponding to whether or not the car's sensors detect that the car is losing traction and the Boolean flag ${\sf traction\_control}$ keeps track of whether or not the vehicle's traction control is engaged.\footnote{The variable $\varphi$ is set to an arbitrary non-negative integer for the purposes of this example.}

Several properties of hybrid programs are present in the program ${\sf traction\_assist}$. (1) denotes a test to check whether the car has lost traction; (2) represents an assignment statement setting ${\sf traction\_control}$ to on; and (3) shows an evolving ordinary differential equation that changes the wheel rotation of the car. The sequential composition operator joins the individual statements together to form a single hybrid program.

\subsection{State and Trace Formulas}
State and trace formulas are used to reason about hybrid programs. A state formula is used to express properties about a state, whereas a trace formula is used to express properties about a trace.
The syntax of state and trace formulas in \stdl can then be summarized as: 
$$\phi, \psi ::= \theta_1 \geq \theta_2 \ | \  \neg \phi \ | \ \phi \wedge \psi \ | \ \forall x. \phi \ | \ [\alpha] \pi$$
$$\pi ::= \phi \ | \ \neg \pi \ | \ \square_{{[a,b]}} \phi $$ 
$$a,b ::= \theta \ | \ \max(\theta_1, \theta_2) \ | \ \min(\theta_1, \theta_2) \ | \ a + b \ | \ a - b$$

A state formula $\phi$ or $\psi$ could express a comparison of two terms ($\theta_1 \geq \theta_2$), a negation of a state formula ($\lnot \phi$), a conjunction of two state formulas ($\phi \land \psi$), a universally quantified ($\forall x. \phi$) state formula over a variable $x \in \mathbb{R}$, or a program necessity ($\boxp{\alpha}\pi$) indicating that all traces of program $\alpha$ starting from the current state satisfy $\pi$. For a disjunction of two state formulas ($\phi \lor \psi$), we define as an abbreviation $\phi \lor \psi \equiv \lnot(\phi \land \psi)$; for an existentially quantified ($\exists x. \phi$) over a variable $x \in \mathbb{R}$, we define $\exists x. \phi \equiv \lnot \forall x. \lnot \phi$; and for a program possibility ($\diap{\alpha}\pi$) over a trace formula $\pi$ indicating that there exists a trace of program $\alpha$ starting from the current state that satisfies $\pi$, we define $\diap{\alpha}\pi \equiv \lnot \boxp{\alpha}\lnot \pi$.

A trace formula $\pi$ can express a state formula ($\phi$), a negation of a trace formula ($\lnot \pi$), or a temporal necessity ($\square_{[a,b]}\phi$) indicating that given the current time $t$, every trace starting in the current state satisfies $\phi$ from time $t+a$ and $t+b$. A temporal possibility ($\dia_{[a,b]}\phi$) indicating that every trace starting in the current state satisfies $\phi$ at some point between time $t+a$ and time $t+b$ is defined as the abbreviation $\dia_{[a,b]}\phi \equiv \lnot \square_{[a,b]} \lnot \phi$. For time intervals of the form $[a,b]$, $a$ and $b$ are terms in the hybrid program evaluated in the first state of a trace (which is always well-defined, see Definition~\ref{def:tracesatisfaction}), or the $\min$ or $\max$ of two terms in the hybrid program. We allow for $a$ and $b$ to be terms in the hybrid program, and not mere constants, since we need to allow for a program variable to appear as the lower or upper bound of an interval $[a,b]$ (see Section~\ref{sec:timing}, where the timing variable $q$ is introduced to appear inside the temporal intervals of an \stdl formula for interval shifting).

\subsection{Length of Traces and Trace Formulas}
Previous works supporting temporal operators within the context of d$\mathcal{L}$ did not need to reason about the length of a trace or a trace formula, due to their use of linear temporal logic operators that do not support reasoning about formulas in time intervals. However, since \stdl involves verifying a trace over specified time intervals, we need to incorporate reasoning about lengths of traces and trace formulas to determine the satisfaction of formulas over traces of hybrid programs. More specifically, we require that for a hybrid program $\alpha$, a trace $\sigma \in \hp{\alpha}$ needs to be sufficiently long to determine the satisfaction of the program necessities and possibilities. This requirement is inspired by that of STL with respect to signal lengths~\cite{maler2004monitoring,maler2013monitoring}, and is similarly justified for \stdl since it is intuitively nonsensical to verify the satisfaction of a trace formula of length $\varphi$ against a trace of length $\varphi_0 < \varphi$.

\begin{definition}[Minimum length of trace formulas]\label{def:traceformulalength}
 The necessary length associated with trace formula $\pi$, written as $\|\pi\|$, to determine the satisfaction of a program necessity or possibility is defined inductively as follows: 
\begin{align*}
    \| \phi \| &= 0\\
    \| \lnot \pi \| &= \| \pi \|\\
    \|\square_{[a,b]}\phi\| &= b
\end{align*}
\end{definition}


\subsection{Satisfaction of State and Trace Formulas}\label{subsec:formulas_in_stdl}

The satisfaction of state and trace formulas in \stdl is defined as follows:
\begin{definition}[Satisfaction of state formulas]\label{def:statesatisfaction}
For a state formula $\phi$ and state $v \in \sta$, we say $v \vDash \phi$ if $v$ satisfies $\phi$. Satisfaction of state formulas with respect to state $v$ is then defined inductively as follows:
\begin{itemize}
    \item $v \vDash \theta_1 \geq \theta_2$ if and only if $val(v, \theta_1) \geq val(v, \theta_2)$;
    \item $v \vDash \neg \phi$ if and only if $v \nvDash \phi$;
    \item $v \vDash \phi \wedge \psi$ if and only if $v \vDash \phi$ and $v \vDash \psi$;
    \item $v \vDash \forall x. \phi$ if and only if $v[x \mapsto d] \vDash \phi$ for all $d \in \mathbb{R}$;
    \item For $\phi$ a state formula, $v \vDash [\alpha]\phi$ if and only if for every trace $\sigma \in \hp{\alpha}$ such that $\first\sigma = v$, if $\sigma$ terminates, then $\last\sigma \vDash \phi$;
    \item For $\pi$ a trace formula, $v \vDash [\alpha]\pi$ if and only if for every trace $\sigma \in \hp{\alpha}$ such that $\first\sigma = v$, if $|\sigma| \geq val(\first\sigma, \|\pi\|)$, then we also have that $\sigma \vDash \pi$;
    
\end{itemize}
\end{definition}

Definition~\ref{def:statesatisfaction} defines the satisfaction of formulas of the form $\boxp{\alpha}\pi$, for $\pi$ a trace formula, as: ``$v \vDash [\alpha]\pi$ iff for each trace $\sigma \in \hp{\alpha}$ such that $\first\sigma = v$, if $|\sigma| \geq val(\first\sigma, \|\pi\|)$, we also have that $\sigma \vDash \pi$." The choice behind this definition for the semantics is not an obvious one, and as such, is explained here for further clarity.

Since \stdl supports full negation of state formulas, we had take special care to ensure that the property for duals for program modalities is not violated in the logic. One of our utmost concerns was to ensure that for all hybrid programs $\alpha$ and all trace formulas $\pi$, it is always the case that $\boxp{\alpha}\pi \equiv \lnot\diap{\alpha}\lnot \pi$. As such, we had three possible choices for the definition of the semantics for formulas of this form.
\begin{enumerate}
    \item[(i)] $v \vDash [\alpha]\pi$ iff for each trace $\sigma \in \hp{\alpha}$ such that $\first\sigma = v$, we have that $\sigma \vDash \pi$.
    
    To ensure that property for duals holds in this case, we would have to define the dual as:
    
    $v \vDash \diap{\alpha}\pi$ iff there exists a trace $\sigma \in \hp{\alpha}$ such that $\first\sigma = v$ and $\sigma \vDash \pi$.
    
    \item[(ii)] $v \vDash [\alpha]\pi$ iff for each trace $\sigma \in \hp{\alpha}$ such that $\first\sigma = v$, if $|\sigma| \geq val(\first\sigma, \|\pi\|)$, we also have that $\sigma \vDash \pi$.
    
    We would then have to define the dual as:
    
    $v \vDash \diap{\alpha}\pi$ iff there exists a trace $\sigma \in \hp{\alpha}$ such that $\first\sigma = v$, and we have that $|\sigma| \geq val(\first\sigma, \|\pi\|)$ and $\sigma \vDash \pi$
    
    \item[(iii)] $v \vDash [\alpha]\pi$ iff for each trace $\sigma \in \hp{\alpha}$ such that $\first\sigma = v$, we have that $|\sigma| \geq val(\first\sigma, \|\pi\|)$ and $\sigma \vDash \pi$.
    
    We would then have to define the dual as:
    
    $v \vDash \diap{\alpha}\pi$ iff there exists a trace $\sigma \in \hp{\alpha}$ such that $\first\sigma = v$, and if $|\sigma| \geq val(\first\sigma, \|\pi\|)$, we also have that $\sigma \vDash \pi$.
\end{enumerate}

Option (i) is the least complicated and arguably the most intuitive one. However, it has one major limitation: it fails to specify the behavior of the logic when the trace being considered is simply not long enough to determine the satisfaction of a trace formula. Consider the simple hybrid program that $x := 5$. We could have a property that checks this program: $\boxp{x:=5}\square_{[0,10]}(x=5)$. However, recall that a (discrete) trace of assignment terminates in zero time. As such, we are left with the following question: what does it mean for a trace to satisfy a property 10 seconds after it has already terminated? Clearly, we need to consider the length of the trace that the property has to be proven over, and ensure that the trace is of necessary length. This idea is not novel: \cite{maler2004monitoring} uses the same approach for defining satisfaction of formulas over signals.

With option (i) eliminated, we are left with options (ii) and (iii) as the most obvious candidates for the definition of trace semantics of \stdl. Having one of the definitions be an implication and the dual be a conjunction is the only way to ensure that the property for duals holds -- it is not possible to have both definitions be implications or conjunctions. With that in mind, we first look at (iii). It is fairly easy to notice that the definition provided in (iii) make it virtually impossible for $\boxp{\alpha}\pi$ to be true: it requires \emph{every} trace $\sigma \in \hp{\alpha}$ to be of the required length -- a trait that is simply not likely in practice. Similarly, it makes it too easy for $\diap{\alpha}\pi$ to be true: any trace $\sigma \in \hp{\alpha}$ with length $|\sigma|<\|\pi\|$ can trivially satisfy the formula. This leaves option (ii), which provides a definition that makes most sense intuitively. For the $\boxp{\alpha}$ case, it might not be reasonable to require that \emph{all} traces have the required length. But for the $\diap{\alpha}$ case, since the presence of just one satisfying trace is sufficient, it should be the case that that one trace is of the required length. This behavior is captured in the definition in (ii), and we employ that definition in the state and trace semantics of \stdl.

\begin{definition}[Satisfaction of trace formulas]\label{def:tracesatisfaction}
For a trace formula $\pi$ and trace $\sigma = (\sigma_0, \dots, \sigma_n) \in \tra$, we say $(\sigma, (i,t)) \vDash \pi$ if $\sigma$ satisfies $\pi$ starting from subtrace $\sigma_i$ at time $t$. We use $\sigma \vDash \pi$ to say that $(\sigma, (0, 0)) \vDash \pi$. Satisfaction of trace formulas with respect to a trace $\sigma$ is then defined inductively as follows:
\begin{itemize}
    \item For $\phi$ a state formula, $(\sigma, (i,t)) \vDash \phi$ if and only if $|\sigma| \geq t$ and $\sigma_i(t) \vDash \phi$; 
    \item $(\sigma, (i,t)) \vDash \neg \pi$ if and only if $(\sigma, (i,t)) \nvDash \pi$;
    \item $(\sigma,(i,t)) \vDash \square_{[a,b]} \phi$ if and only if for every $t' \in [t+val(\first\sigma,$ $a), t+val(\first\sigma,b)]$ and for every $i$ such that $(i,t') \in {\sf dom}(\sigma)$, it follows that $(\sigma, (i,t')) \vDash \phi$.
\end{itemize}
\end{definition}

Since we define duals as abbreviations, we can build on Definitions~\ref{def:statesatisfaction} and \ref{def:tracesatisfaction} to say that:
\begin{itemize}
    \item $v \vDash \phi \vee \psi$ if and only if $v \vDash \phi$ or $v \vDash \psi$;
    \item $v \vDash \exists x. \phi$ if and only if $v[x \mapsto d] \vDash \phi$ for some $d \in \mathbb{R}$;
    \item For $\phi$ a state formula, $v \vDash \langle\alpha\rangle\phi$ if and only if there exists a trace $\sigma \in \hp{\alpha}$ such that $\sigma$ terminates with $\first\sigma = v$ and $\last\sigma \vDash \phi$;
    \item For $\pi$ a trace formula, $v \vDash \diap{\alpha}\pi$ if and only if there exists a trace $\sigma \in \hp{\alpha}$ such that $|\sigma| \geq val(\first\sigma, \|\pi\|)$ and $\first\sigma = v$ and $\sigma \vDash \pi$;
    \item $(\sigma,(i,t)) \vDash \lozenge_{[a,b]} \phi$ if and only if there exists some $t' \in [t+val(\first\sigma,a), t+val(\first\sigma,b)]$ and there exists some $i$ such that $(i,t') \in {\sf dom}(\sigma)$ and $(\sigma,(i,t')) \vDash \phi$.
\end{itemize}

Given a trace $\sigma$ and an interval $[a,b]$ such that $val(\first\sigma,b) < val(\first\sigma,a)$, we define the interval to be an empty set. As such, formulas such as $\square_{[a,b]}\phi$ and $\dia_{[a,b]}\phi$ are defined to be trivially true and trivially false respectively over this empty interval. This choice deviates from the norm set by STL: formulas like $\square_{[a,b]}\phi$ and $\dia_{[a,b]}\phi$ in STL require that for constants $a$ and $b$, we have $a \geq 0$ and $b \geq a$. This requirement is more difficult to impose in \stdl, since time interval shifting due to sequential composition (see Section~\ref{sec:timing}) could result in a formula where $val(\first\sigma,b) < val(\first\sigma,a)$, and we need the semantics of \stdl to handle such cases appropriately. In the rest of the paper, given a trace $\sigma$ and an interval $[a,b]$, we refer to $val(\first\sigma,a)$ and $val(\first\sigma,b)$ as simply $a$ and $b$ respectively for easier readability.

\subsection{Timing Hybrid Programs}\label{sec:timing}
A major technical difficulty arising from our integration of STL with d$\mathcal{L}$ is the fact that we now need to reason about not only the time intervals where a certain temporal property holds, but also about how the length of a trace of a hybrid program affects the time intervals under consideration. This problem surfaces immediately for the sequential composition of two programs $\alpha$ and $\beta$, but is in fact a general challenge with the integration of continuous traces from d$\mathcal{L}$ and temporal operators from STL.

Let us consider a trace $\sigma = \sigma_\alpha \circ \sigma_\beta \in \hp{\alpha;\beta}$ such that $\sigma_\alpha \in \hp{\alpha}$ terminates at time $c$, following which $\sigma_\beta \in \hp{\beta}$ begins. For simplicity, let us also assume that $a<c<b$ in determining the satisfiability of $\square_{[a,b]}\psi$ by $\sigma$. Note that $\sigma \vDash \square_{[a,b]}\psi$ if and only if $\sigma_\alpha \vDash \square_{[a, c]}\psi$ and $\sigma_\beta \vDash \square_{[0, b-c]}\psi$. Intuitively, this means that $\alpha$ runs first \emph{until time $c$} and $\sigma_\alpha$ satisfies $\psi$ from time $t+a$ to time $t+c$ (where $t$ is the current time), following which $\beta$ runs and $\sigma_\beta$ satisfies $\psi$ from the time it starts to the time $t+(b-c)$ (due to a shifting of the time interval, since part of the interval $[a,b]$ was already satisfied by $\sigma_\alpha$). A key property that this rule relies on is the termination of program $\alpha$ at time $c$. The value of $c$ is not known by a programmer in advance (since a program can have non-deterministic properties), although a programmer could annotate the code to enforce the termination of a program at a certain time. A more elegant solution, however, is to measure the time it takes for a program $\alpha$ to run, and use the measured value for the time offset for any subsequent temporal operators that may need interval shifting.
\begin{definition}[Timing of hybrid programs]\label{def:timing}
Given hybrid programs $\alpha$ and $\beta$, and a  variable $q$ \emph{fresh} in $\alpha$ and $\beta$, the timing of hybrid programs is defined inductively as follows:
\begin{itemize}
    \item $time(x:=\theta) \triangleq x:=\theta$
    \item $time(x' = \theta \ \& \ \chi) \triangleq \{x'=\theta, q'=1 \ \& \ \chi\}$ 
    \item $time(?\chi) \triangleq \ ?\chi$
    \item $time(\alpha \cup \beta) \triangleq time(\alpha) \cup time(\beta)$
    \item $time(\alpha; \beta) \triangleq time(\alpha); time(\beta)$
    \item $time(\alpha^*) \triangleq (time(\alpha))^*$
\end{itemize}
The time taken $q$ by a hybrid program $\alpha$ is then given by the program:
$$timed(\alpha,q) \equiv q:=0;\ time(\alpha)$$
\end{definition}

Recall that a trace $\sigma$ is a function that maps a pair $(i,t)$ to a state $v \in \sta$, whereas a state $v$ is a function from the set of variables $\var$ to $\mathbb{R}$. For $\sigma_\alpha \in \hp{\alpha}$, we write $\sigma_\alpha|_{S}$ to refer to $\sigma$ restricted to variables in the set $S \subseteq \var$. Mathematically, $\sigma_\alpha|_S : (\mathbb{R} \times \mathbb{N}) \rightarrow S \rightarrow \mathbb{R}$, where $(i,t) \mapsto \sigma_\alpha(i,t)|_S$. We can then define an equality between timed and untimed hybrid programs as follows:

\begin{lemma}[Equality of timed and untimed hybrid programs]\label{lemma:timedprograms}
Given a hybrid program $\alpha$, the following set equality always holds:
\begin{align*}
    \{\sigma_\alpha|_{\var - \{q\}} : \sigma_\alpha \in \hp{\alpha}\} &=\\ \{\sigma_{timed(\alpha,q)}|_{\var - \{q\}} &: \sigma_{timed(\alpha,q)} \in \hp{timed(\alpha,q)}\}
\end{align*}
\end{lemma}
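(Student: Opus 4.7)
The plan is to proceed by structural induction on the hybrid program $\alpha$, proving the slightly stronger intermediate statement
\[
\{\sigma_\alpha|_{\var - \{q\}} : \sigma_\alpha \in \hp{\alpha}\} = \{\sigma_{time(\alpha)}|_{\var - \{q\}} : \sigma_{time(\alpha)} \in \hp{time(\alpha)}\},
\]
from which the lemma follows immediately: since $q$ is fresh, prepending $q := 0$ contributes only a zero-duration discrete step that disappears once we restrict to $\var - \{q\}$, so $\hp{timed(\alpha,q)}$ and $\hp{time(\alpha)}$ yield the same set after restriction.

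For the base cases, $x := \theta$ and $?\chi$ are fixed points of $time$, so the equality is immediate by Definition~\ref{def:hybridprograms}. The interesting base case is the ODE $x' = \theta \ \& \ \chi$, where $time$ augments the system to $\{x' = \theta,\ q' = 1 \ \& \ \chi\}$. Because $q$ is fresh, neither $\theta$ nor $\chi$ mentions $q$, so the $x$-component of the augmented ODE is completely decoupled from $q$. By standard ODE existence/uniqueness, every state flow $\sigma$ of the original system lifts uniquely to a state flow $\tilde\sigma$ of the augmented system with $\tilde\sigma(t)(q) = \sigma(0)(q) + t$, and conversely every augmented flow projects to an original flow, with matching definition domains and the same evolution-constraint violations. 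Hence the two sets of non-error traces agree after restriction, and so do the error traces (which arise from $v \nvDash \chi$ and are insensitive to $q$).

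For the inductive cases, Definition~\ref{def:timing} makes $time$ distribute syntactically over $\cup$, $;$, and ${}^*$, so the trace-semantic clauses in Definition~\ref{def:hybridprograms} let me glue together the inductive hypotheses on subprograms. The only care is with sequential composition: I need to check that restriction to $\var - \{q\}$ commutes with the composition operator $\sigma \circ \rho$. This holds because the well-definedness condition $\last\sigma = \first\rho$ depends on every variable, but for traces coming from $time(\alpha)$ and $time(\beta)$ the $q$-components are automatically compatible (the first step of $time(\beta)$ starts at whatever value of $q$ the last step of $time(\alpha)$ left, with no constraint). Moreover, $|\sigma|$ is a property of the definition domain rather than of the state, so the time shift used in $\bar\rho$ is preserved under restriction.

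The main obstacle is the ODE case: I need a clean argument that the decoupled variable $q$ does not disturb existence/uniqueness or the set of admissible durations for the original system, and that error traces correspond bijectively. A secondary nuisance is the $\alpha^*$ case, where I must verify by an inner induction on $n$ that $\hp{(time(\alpha))^n}$ and $\hp{time(\alpha^n)}$ agree after restriction; this reduces to iterated application of the sequential-composition case, so it is routine once composition is handled. All other steps are bookkeeping.
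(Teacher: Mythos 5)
Your proposal is correct and is in essence the same argument the paper gives: the paper's entire proof is the one-sentence assertion that the lemma follows from Definition~\ref{def:timing} together with the freshness of $q$, and your structural induction (with the decoupled-ODE observation as the only substantive case) is precisely the elaboration of that assertion. The one caveat is that under the paper's literal trace semantics the prepended $q:=0$ does not ``disappear'' after restriction but leaves two zero-duration stuttering pieces at the head of the trace, so the stated set equality really holds only up to such stuttering --- an imprecision that originates in the lemma statement itself and that the paper's own one-line proof likewise ignores.
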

\begin{proof}
The proof of Lemma~\ref{lemma:timedprograms} is true by Definition~\ref{def:timing}, keeping in mind the fact that $q$ is fresh in $timed(\alpha,q)$. 
\end{proof}
Intuitively, Lemma~\ref{lemma:timedprograms} expresses that for a trace $\sigma_\alpha \in \hp{timed(\alpha,q)}$, there always exists a corresponding trace $\sigma_\alpha' \in \hp{\alpha}$, and vice versa, such that $\sigma_\alpha$ and $\sigma_\alpha'$ are identical with respect to every variable except the fresh variable $q$ introduced by $timed(\alpha,q)$. We rely on this lemma for the proof of soundness of the \stdl calculus.

\section{Proof Calculus}\label{sec:proofcalc}
In this section, we outline a proof calculus for \stdl, and present a proof of soundness for the rules in the schemata of the proof calculus. 
\subsection{Normalization of Trace Formulas}\label{sec:normalization}
Sequential composition of two traces is a major challenge in a calculus handling alternating program and temporal modalities. To see why, let us consider a state formula $\diap{\alpha;\beta}\square_{[a,b]}\phi$ limited to terminating traces only for simplicity. This formula states that there exists a trace $\sigma_\alpha\in\hp{\alpha}$ followed by the trace $\sigma_\beta\in\hp{\beta}$ such that sequential composition of the traces satisfies $\square_{[a,b]}\phi$. Let us assume further for simplicity that all traces $\sigma_\alpha\in\hp{\alpha}$ terminate between time $a$ and $b$. A first attempt at writing a rule for this state formula could take the form:
$$\frac{\diap{\alpha}\square_{[a,b]}\phi \land \diap{timed(\alpha,q)}\diap{\beta}\square_{[0,b-q]}\phi}{\diap{\alpha;\beta}\square_{[a,b]}\phi}$$
Unfortunately, this rule is intuitive but not sound, since the choice of $\sigma_\alpha$ and $\sigma_\beta$ could be non-deterministic. The premise says that there exists a trace $\sigma_\alpha \in \hp{\alpha}$ in which $\square_{[a,b]}\phi$ is true, and a trace trace $\sigma_\alpha' \in \hp{\alpha}$ followed by $\sigma_\beta$ in which $\square_{[0,b-q]}\phi$ is true, but $\sigma_\alpha$ and $\sigma_\alpha'$ need not necessarily be the same trace. To capture the fact that $\sigma_\alpha$ and $\sigma_\alpha'$ are indeed the same traces, we need a premise resembling:
$$\diap{timed(\alpha,q)}(\square_{[a,b]}\phi \land \diap{\beta}\square_{[0,b-q]}\phi)$$
The rule is not in the syntax of \stdl, since it involves a conjunction between a state formula and a trace formula. We could choose to add this conjunction to the syntax of the logic, but we would still need to reason about the meaning of this conjunction if the trace $\sigma_\alpha$ is non-terminating.

To circumvent this problem cleanly, we need a conjunction operator that reasons about properties like $\phi$ that are true at the end of a trace and properties like $\square_{[a,b]}\phi$ that are true during a trace. dTL\textsuperscript{2} introduces a notion of normalized trace formulas to achieve the expressibility needed for sequential composition for LTL formulas within the context of hybrid systems by introducing a conjunction operator $\sqand$ and a disjunction operator $\sqor$~\cite{jeannin2014dtl}. We extend \stdl with a similar normalization of trace formulas to reason about time-bounded trace properties during the execution of a trace and state properties at the end of a trace. 
We augment the syntax of state formulas to accept normalized trace formulas, and define the syntax of a normalized trace formula $\xi$ as:
\begin{align*}
    \phi, \psi ::=&\ \dots \ | \ [\alpha] \xi \ | \ \langle\alpha\rangle \xi\\
    \xi ::=& \ \phi \sqcap \square_{[a,b]} \psi \ | \ \phi \sqcup \lozenge_{[a,b]} \psi  
\end{align*}

\begin{definition}[Semantics of normalized trace formulas]\label{def:normalization}
For a normalized trace formula $\xi$ and trace $\sigma = (\sigma_0, \dots, \sigma_n) \in \tra$, we say $(\sigma, (i,t)) \vDash \xi$ if $\sigma$ satisfies $\xi$ starting from subtrace $\sigma_i$ at time $t$. We say that $\sigma \vDash \xi$ if $(\sigma, (0,0)) \vDash \xi$. Satisfaction of normalized trace formulas with respect to a trace $\sigma$ is then defined inductively as follows:
\begin{itemize}
    \item $\sigma \vDash \phi \sqcap \square_{[a,b]} \psi$ if and only if
    \begin{itemize}
        \item $\last\sigma \vDash \phi$ and $\sigma \vDash \square_{[a,b]} \psi$, if $\sigma$ terminates,
        \item $\sigma \vDash \square_{[a,b]} \psi$ otherwise;
    \end{itemize}
    
    \item $\sigma \vDash \phi \sqcup \lozenge_{[a,b]} \psi$ if and only if
    \begin{itemize}
        \item $\last\sigma \vDash \phi$ or $\sigma \vDash \lozenge_{[a,b]} \psi$, if $\sigma$ terminates,
        \item $\sigma \vDash \lozenge_{[a,b]} \psi$ otherwise.
    \end{itemize}
\end{itemize}
\end{definition}

Given a normalized state formula $\xi$, we use the notation $\xi_{sta}$ to refer to the state formula in $\xi$, and we use the notation $\xi_{tra}$ to refer to the trace formula in $\xi$. For example, $(\phi \sqand \square_{[a,b]}\psi)_{sta} = \phi$, and $(\phi \sqand \square_{[a,b]}\psi)_{tra} = \square_{[a,b]}\psi$. 

We define the minimum length of normalized trace formulas required to determine the satisfaction of program necessities and possibilities as follows:
\begin{definition}[Minimum length of normalized trace formulas]
The minimum length associated with a normalized trace formula $\xi$, denoted by $\|\xi\|$, to determine the satisfaction of a program necessity or possibility is defined as follows:
\begin{align*}
    \|\phi \sqand \square_{[a,b]}\psi\| &= b\\
    \|\phi \sqor \dia_{[a,b]}\psi\| &= b
\end{align*}
\end{definition}

We build on Definition~\ref{def:statesatisfaction} for state formulas as follows:
\begin{itemize}
    \item $v\vDash\boxp{\alpha}\xi$ if and only if for each trace $\sigma \in \hp{\alpha}$ such that $\first\sigma=v$ and if $\sigma$ terminates then $\last\sigma \vDash \xi_{sta}$, and if $|\sigma| \geq val(\first\sigma, \|\xi\|)$, we also have that $\sigma \vDash \xi$;
    \item $v\vDash\diap{\alpha}\xi$ if and only if there exists trace $\sigma \in \hp{\alpha}$ such that $\first\sigma=v$ and if $\sigma$ terminates then $\last\sigma \vDash \xi_{sta}$, and $|\sigma| \geq val(\first\sigma, \|\xi\|)$ and $\sigma \vDash \xi$.
\end{itemize}

Given the semantics of normalized trace formulas in \stdl, we derive rules to transform any trace formula in \stdl into a normalized trace formula. The rules for normalization are shown in Figure~\ref{fig:normalization}. The relation $\leadsto$ allows us to only consider normalized trace formulas for the rules of the proof calculus of \stdl, thereby simplifying the proof system greatly.
\begin{figure}[t]
    \centering
    \begin{align*}
        &\phi \leadsto \phi \ &(\leadsto\phi)\\
        &\square_{[a,b]}\phi \leadsto \true\sqand\square_{[a,b]}\phi \ &(\leadsto\square_I)\\ 
        &\dia_{[a,b]}\phi \leadsto \false\sqor\dia_{[a,b]}\phi \ &(\leadsto\dia_I)
    \end{align*}
    \caption{Normalization of trace formulas in \stdl.}
    \label{fig:normalization}
\end{figure}

\begin{lemma}[Soundness of normalized trace formulas]\label{lemma:soundnessnormalization}
If $\pi \leadsto \xi$, then for all traces $\sigma$, it follows that $\sigma \vDash \pi$ if and only if $\sigma \vDash \xi$.
\end{lemma}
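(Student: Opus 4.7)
The plan is to prove the lemma by case analysis on the three normalization rules in Figure~\ref{fig:normalization}, since these rules are not inductive and only cover three base shapes of trace formulas. For each rule, I will unfold Definition~\ref{def:tracesatisfaction} on the left-hand side and Definition~\ref{def:normalization} on the right-hand side, and verify that they coincide on every trace $\sigma$.

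First, for rule $(\leadsto \phi)$, the statement $\pi \leadsto \xi$ has $\pi = \xi = \phi$, so the two satisfaction relations are literally the same formula and there is nothing to prove. Second, for rule $(\leadsto \square_I)$, where $\pi = \square_{[a,b]}\phi$ and $\xi = \true \sqand \square_{[a,b]}\phi$, I would split on whether $\sigma$ terminates. If $\sigma$ terminates, by Definition~\ref{def:normalization} we have $\sigma \vDash \xi$ iff $\last\sigma \vDash \true$ and $\sigma \vDash \square_{[a,b]}\phi$; since $\last\sigma \vDash \true$ holds vacuously, this reduces to $\sigma \vDash \square_{[a,b]}\phi$, which is $\sigma \vDash \pi$. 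If $\sigma$ does not terminate, the definition directly reduces to $\sigma \vDash \square_{[a,b]}\phi$, again matching $\pi$. Third, for rule $(\leadsto \dia_I)$, where $\pi = \dia_{[a,b]}\phi$ and $\xi = \false \sqor \dia_{[a,b]}\phi$, the same case split applies. If $\sigma$ terminates, then $\sigma \vDash \xi$ iff $\last\sigma \vDash \false$ or $\sigma \vDash \dia_{[a,b]}\phi$; since $\last\sigma \vDash \false$ never holds, this collapses to $\sigma \vDash \dia_{[a,b]}\phi = \pi$. If $\sigma$ does not terminate, the definition again immediately reduces to $\sigma \vDash \dia_{[a,b]}\phi$.

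Honestly, there is no meaningful obstacle here: the lemma is essentially a check that $\true$ is an identity for $\sqand$ and $\false$ is an identity for $\sqor$ in the specific mixed sense given by Definition~\ref{def:normalization}, combined with the fact that the relation $\leadsto$ is defined only by three non-recursive rules. The only thing to be careful about is the two-branch structure of Definition~\ref{def:normalization} (terminating vs.\ non-terminating $\sigma$), making sure that $\last\sigma$ is well-defined exactly when it is invoked. Since $\last\sigma$ is only used in the ``$\sigma$ terminates'' branch, where it is by definition available (possibly as $\Lambda$ for error traces, but even then $\Lambda \vDash \true$ and $\Lambda \not\vDash \false$ hold in the usual vacuous sense), the case analysis goes through cleanly without needing any extra structural induction.
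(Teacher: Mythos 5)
Your proof is correct and follows essentially the same route as the paper, which simply observes that rule $(\leadsto\phi)$ is trivial and that rules $(\leadsto\square_I)$ and $(\leadsto\dia_I)$ hold directly by the semantics of Definition~\ref{def:normalization}; your write-up just makes explicit the terminating/non-terminating case split and the fact that $\true$ and $\false$ act as identities for $\sqand$ and $\sqor$ respectively.
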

\begin{proof}
Soundness of rule $(\leadsto\phi)$ is trivial. Soundness of rules $(\leadsto\square_I)$, $(\leadsto\dia_I)$ is true by the semantics in Definition~\ref{def:normalization}.
\end{proof}

\begin{lemma}[Existence of a normalized trace formula]\label{lemma:normalizedexistence}
For any trace formula $\pi$, there exists a state formula $\phi$ such that $\pi \leadsto \phi$, or a normalized trace formula $\xi$ such that $\pi \leadsto \xi$.
\end{lemma}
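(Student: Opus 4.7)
The plan is to proceed by structural induction on the trace formula $\pi$, following the grammar $\pi ::= \phi \mid \lnot \pi \mid \square_{[a,b]} \phi$, exhibiting in each case either a state formula or a normalized trace formula to which $\pi$ normalizes via the rules of Figure~\ref{fig:normalization}, possibly after first rewriting $\pi$ using standard semantic equivalences.

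The two base cases are immediate. If $\pi = \phi$ is a state formula, rule $(\leadsto\phi)$ gives $\pi \leadsto \phi$. If $\pi = \square_{[a,b]}\phi$, rule $(\leadsto\square_I)$ gives $\pi \leadsto \true\sqand\square_{[a,b]}\phi$, which is a normalized trace formula.

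The inductive case $\pi = \lnot \pi'$ requires a sub-case split on the shape of $\pi'$. If $\pi' = \phi$ is itself a state formula, then $\lnot \phi$ is a state formula by the state-formula grammar, and $(\leadsto\phi)$ applies. If $\pi' = \lnot \pi''$, I would use that $\lnot\lnot \pi''$ is semantically equivalent to $\pi''$ by Definition~\ref{def:tracesatisfaction}, and invoke the induction hypothesis on the strictly smaller subformula $\pi''$. If $\pi' = \square_{[a,b]} \phi$, then $\pi = \lnot \square_{[a,b]} \phi$ coincides (via the $\dia$ abbreviation together with $\lnot\lnot \phi \equiv \phi$) with $\dia_{[a,b]} \lnot \phi$, and rule $(\leadsto\dia_I)$ yields the normalized form $\false\sqor \dia_{[a,b]} \lnot \phi$.

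The main obstacle is the treatment of the negation case, since Figure~\ref{fig:normalization} contains no explicit rules for pushing negations through temporal operators or for eliminating double negations. The argument therefore rests on the semantic equivalences $\lnot\lnot \pi \equiv \pi$ and $\lnot \square_{[a,b]} \phi \equiv \dia_{[a,b]} \lnot \phi$; the latter is immediate from the fact that $\dia$ is introduced as an abbreviation of $\lnot \square \lnot$, and Lemma~\ref{lemma:soundnessnormalization} ensures that normalizing an equivalent form preserves satisfaction on all traces. Well-foundedness must also be checked: each rewriting step either strictly reduces the number of outer negations or applies the IH to a structurally smaller trace formula, so the induction terminates in one of the three explicit normalization rules.
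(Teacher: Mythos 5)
Your proof is correct and is, in substance, the structural induction that the paper's own proof omits: the paper disposes of this lemma in one sentence by appealing to the rules of Figure~\ref{fig:normalization} without any case analysis. The one place where you go genuinely beyond the paper is the negation case, and you are right to single it out: the three rules, read literally, have no left-hand side matching $\lnot\lnot\pi$ or $\lnot\square_{[a,b]}\phi$, so the lemma only holds if trace formulas are identified up to the abbreviation $\dia_{[a,b]}\phi \equiv \lnot\square_{[a,b]}\lnot\phi$ together with double-negation elimination (for a state formula $\phi$, note that $\lnot\lnot\phi$ is itself a state formula, so rule $(\leadsto\phi)$ already covers that subcase directly). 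Your justification that this identification is harmless --- Lemma~\ref{lemma:soundnessnormalization} guarantees $\sigma\vDash\pi \Leftrightarrow \sigma\vDash\xi$, which is all that the downstream rules $(\boxp{\ }\leadsto)$ and $(\diap{}\leadsto)$ actually need --- is exactly the right supporting observation, and the well-foundedness check for the $\lnot\lnot$ recursion is necessary and correct. In short, your argument is sound and more informative than the paper's; it makes explicit a convention the paper relies on only implicitly.
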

\begin{proof}
This lemma is a consequent of the $\leadsto$ relation presented in Figure~\ref{fig:normalization}.
\end{proof}

Lemma~\ref{lemma:normalizedexistence} allows the proof system of \stdl to just focus on axiomatizing only formulas that use normalized traces, and inherit non-temporal rules from d$\mathcal{L}$~\cite{platzer2010logical,platzer2012logics,platzer2008differential}. This results in a cleaner, simpler proof calculus for \stdl.

\subsubsection{Running Example: Traction Assist in Cars} Recall that our running example introduced a safety property, $\phi$, that required a skidding car's traction assist to reduce the wheel rotation $\rho$ of the car to some constant, $\rho_0$, within 1 to 5 seconds to help regain traction. This property can be expressed as a normalized \stdl formula as follows:
$$\phi ::= {\sf \boxp{traction\_assist}} 
\left(
\begin{tabular}{c}
    $\lnot{\sf no\_traction}$ $\sqor$ $\dia_{[1, 5]}(\rho < \rho_0)$
\end{tabular}
\right)$$
For ease of understanding, the normalized disjunction $\lnot{\sf no\_traction}$ $\sqor$ $\dia_{[1, 5]}(\rho < \rho_0)$ can be thought of as the implication ${\sf no\_traction}$ $\Rightarrow$ $\dia_{[1, 5]}(\rho < \rho_0)$ (although this implication is not directly supported in the sytax of \stdl). We provide a proof sketch of this property in Section~\ref{sec:casestudyproof}.

\subsection{Proof Calculus}\label{sec:proofrules}
This section presents the proof calculus of \stdl. As in d$\mathcal{L}$, the rules in the proof calculus of \stdl typically follow a symbolic decomposition pattern whereby hybrid programs may be decomposed syntactically as needed. The proof calculus transforms STL formulas into temporal-free formulas to leverage the non-temporal rules of d$\mathcal{L}$. As such, the proof system inherits its non-temporal rules from d$\mathcal{L}$~\cite{platzer2008differential,platzer2012logics,platzer2010logical}, and adds its own temporal rules to allow for expressing temporal formulas for given time intervals. All rules should be used in the same way as in the d$\mathcal{L}$ proof calculus.

Note that with the exceptions of rules (ind $\sqor_t$) and (con $\sqand_t$) (see Figure 2), all rules are actually equivalences between the premise and the conclusion. In other words, each rule has a dual such that the negation of both the premise and the conclusion is also true. Therefore, when we write rule $\displaystyle \frac{\rho}{\phi}$, the following two rules are both true:
$$\frac{\rho}{\phi}\qquad\qquad \frac{\lnot\rho}{\lnot\phi}$$ 
Such duals for the rules contain the proof rules for $\langle\alpha\rangle$ when the original rule contains the proof rules for $\boxp{\alpha}$, and vice versa (again, except for rules (ind $\sqor_t$) and (con $\sqand_t$)).

\subsubsection{Inheritance of Non-Temporal and Temporal Rules}
In addition to the temporal rules introduced in Figure 2, \stdl also uses the proof system of d$\mathcal{L}$. Indeed, the goal of the proof calculus introduced here is to leverage the non-temporal rules of d$\mathcal{L}$ to reason about temporal properties of formulas. Since we build \stdl to conservatively extend d$\mathcal{L}$, it is sound to inherit the proof calculus of d$\mathcal{L}$. 

\subsubsection{Introduction of New Temporal Rules}\label{sec:inferencerules}
This subsection introduces the temporal rules, grouped by program construct for hybrid programs, for the proof calculus of \stdl. A detailed rule schemata for the proof calculus is included in Figure 2.

\begin{figure} \label{fig:proof_calculus}
\begin{center}
\raggedright
\scalebox{1.1}{\textbf{Normalization of Trace Formulas}}
$$\frac{\pi \leadsto \xi \ \ \boxp{\alpha}\xi}{\boxp{\alpha}\pi}\ (\boxp{\ }\leadsto) \ \ \ 
\frac{\pi \leadsto \xi \ \ \diap{\alpha}\xi}{\diap{\alpha}\pi} \ (\diap{}\leadsto)$$
\\
\scalebox{1.1}{\textbf{Assignment}}
$$\frac{\left(
\begin{tabular}{c}
$(a=0 \land b=0)\land(\psi\land\boxp{x:=\theta}(\phi\land\psi))\ \lor$\\
$((a>0 \land b\geq a)\land\boxp{x:=\theta}\phi)$
\end{tabular}
\right)}
{[x := \theta](\phi \sqcap \square_{[a,b]} \psi)}\ ([:=]\ \sqcap_{t})$$
$$\frac{\left(
\begin{tabular}{c}
$(a=0 \land b=0)\land(\psi\lor\boxp{x:=\theta}(\phi\lor\psi))\ \lor$\\
$((a>0 \land b\geq a)\land\boxp{x:=\theta}\phi)$
\end{tabular}
\right)}
{[x := \theta](\phi \sqcup \lozenge_{[a,b]} \psi)}\ ([:=]\ \sqcup_{t})$$
$$\frac{
\left(
\begin{tabular}{c}
$(((a=0 \land b=0)\land(\psi\lor\diap{x:=\theta}(\phi\lor\psi))\ \lor$\\
$((a>0 \land b\geq a)\land\diap{x:=\theta}\phi$
\end{tabular}
\right)
}
{\diap{x := \theta}(\phi \sqcup \lozenge_{[a,b]} \psi)}\ (\diap{:=}\ \sqcup_{t})$$
$$\frac{
\left(
\begin{tabular}{c}
$((a=0 \land b=0)\land(\psi\land\diap{x:=\theta}(\phi\land\psi))\ \lor$\\
$((a>0 \land b\geq a)\land\diap{x:=\theta}\phi)$
\end{tabular}
\right)
}
{\diap{x := \theta}(\phi \sqcap \square_{[a,b]} \psi)}\ (\diap{:=}\ \sqcap_{t})$$
\\
\scalebox{1.1}{\textbf{Test}}
$$\frac{
\left(
\begin{tabular}{c}
$((a=0\land b=0)\land((\chi \wedge (\phi \wedge \psi)) \vee (\neg \chi \wedge \psi))\ \lor$\\
$((a>0\land b\geq a)\land(\lnot\chi \lor (\chi \land \phi)))$
\end{tabular}
\right)}
{[?\chi](\phi \sqcap \square_{[a,b]} \psi)}\ ([?] \ \sqcap_t)$$
$$\frac{
\left(
\begin{tabular}{c}
$((a=0\land b=0)\land((\chi \wedge (\phi \lor \psi)) \vee (\neg \chi \wedge \psi))\ \lor$\\
$((a>0\land b\geq a)\land(\lnot\chi \lor (\chi \land \phi)))$
\end{tabular}
\right)}
{[?\chi](\phi \sqcup \dia_{[a,b]} \psi)}\ ([?] \ \sqcup_t)$$
$$\frac{
\left(
\begin{tabular}{c}
$((a=0\land b=0)\land((\chi \wedge (\phi \lor \psi)) \vee (\neg \chi \wedge \psi))\ \lor$\\
$((a>0\land b\geq a)\land(\lnot\chi \lor (\chi \land \phi)))$
\end{tabular}
\right)}
{\diap{?\chi}(\phi \sqcup \dia_{[a,b]} \psi)}\ (\diap{?} \ \sqcup_t)$$
$$\frac{
\left(
\begin{tabular}{c}
$((a=0\land b=0)\land((\chi \wedge (\phi \wedge \psi)) \vee (\neg \chi \wedge \psi))\ \lor$\\
$((a>0\land b\geq a)\land(\lnot\chi \lor (\chi \land \phi)))$
\end{tabular}
\right)}
{\diap{?\chi}(\phi \sqcap \square_{[a,b]} \psi)}\ (\diap{?} \ \sqcap_t)$$
\\

\scalebox{1.1}{\textbf{Non-deterministic Choice}}
$$\frac{\boxp{\alpha}\xi \land \boxp{\beta}\xi}{\boxp{\alpha \cup \beta}\xi}\ ([\cup]\ \xi) \ \ \ \ \ \ \ \ \ \ \ 
\frac{\diap{\alpha}\xi \lor \diap{\beta}\xi}{\diap{\alpha \cup \beta}\xi}\ (\diap{\cup}\ \xi)$$
\\
\scalebox{1.1}{\textbf{Sequential Composition}}
$$\frac{
[timed(\alpha,q)]([\beta](\phi \sqcap \square_{[\max(0,a-q),b-q]} \psi)\sqcap \square_{[a,\min(b,q)]} \psi)
}
{\boxp{\alpha;\beta}(\phi\sqand\square_{[a,b]}\psi)}\ ([;]\sqand_t)$$
$$\frac{[timed(\alpha,q)]([\beta](\phi \sqcup \dia_{[\max(0,a-q),b-q]} \psi)\sqcup \dia_{[a,\min(b,q)]} \psi)}{\boxp{\alpha;\beta}(\phi\sqor\dia_{[a,b]}\psi)}\ ([;]\sqor_t)$$
$$\frac{\diap{timed(\alpha,q)}(\diap{\beta}(\phi \sqcup \dia_{[\max(0,a-q),b-q]} \psi)\sqcup \dia_{[a,\min(b,q)]} \psi)}{\diap{\alpha;\beta}(\phi\sqor\dia_{[a,b]}\psi)}\ (\diap{;}\sqor_t)$$
$$\frac{\diap{timed(\alpha,q)}(\diap{\beta}(\phi \sqcap \square_{[\max(0,a-q),b-q]} \psi)\sqcap \square_{[a,\min(b,q)]} \psi)}{\diap{\alpha;\beta}(\phi\sqand\square_{[a,b]}\psi)}\ (\diap{;}\sqand_t)$$

\end{center}
\caption{Rule schemata of the proof calculus for \stdl.}
\end{figure}

\begin{figure} \label{fig:proof_calculus}
\ContinuedFloat
\begin{center}
\raggedright
\scalebox{1.1}{\textbf{Ordinary Differential Equation}}
$$\frac{
\left(
\begin{tabular}{c}
    $(b < a \land [x'=\theta \ \& \ \chi]\phi) \ \lor$\\
    $(\lnot \chi \land (\lnot(a=0 \land b \geq a) \lor \psi)) \ \lor$ \\
    $([t:=0;\{x'=\theta, t'=1 \ \& \ (\chi \land t \leq a)\}; ?(t=a)]$\\
    $[\{x'=\theta, t'=1 \ \& \ (\chi \land t \leq b)\}]\psi \land [x'=\theta \ \& \ \chi]\phi)$
\end{tabular}
\right)
}{[x'=\theta \ \& \ \chi](\phi \sqand \square_{[a,b]} \psi)}\text{ ($['] \ \sqcap_{t}$)}$$
$$\frac{
\left(
\scalebox{0.94}{\begin{tabular}{c}
    $(b < a \land [x'=\theta \ \& \ \chi]\phi) \ \lor$\\
    $((\chi \lor [t := 0; \{x' = \theta, t' = 1\ \&\ (\chi \land t \leq a)\}; ?(t=a)]\psi)\ \land$ \\
    $\boxp{x' = \theta \ \& \ (\chi \land \lnot\psi)}\phi\ \land$\\
    $[t := 0; \{x' = \theta, t' = 1\ \&\ (\chi \land t \leq a)\}; ?(t=a)]$\\
    $\langle\{x' = \theta, t'=1\ \&\ (t \leq b)\}\rangle(\lnot \chi \lor \psi))$
\end{tabular}}
\right)
}{[x'=\theta \ \& \ \chi](\phi \sqor \dia_{[a,b]} \psi)}\text{ ($['] \ \sqcup_{t}$)}$$
$$\frac{
\left(
\begin{tabular}{c}
    $(b < a \land [x'=\theta \ \& \ \chi]\phi) \ \lor$\\
    $(\lnot \chi \land (\lnot(a=0 \land b \geq a) \lor \psi)) \ \lor$ \\
    $([t:=0;\{x'=\theta, t'=1 \ \& \ (\chi \land t \leq a)\}; ?(t=a)]$\\
    $[\{x'=\theta, t'=1 \ \& \ (\chi \land t \leq b)\}]\psi \land [x'=\theta \ \& \ \chi]\phi)$
\end{tabular}
\right)
}{[x'=\theta \ \& \ \chi](\phi \sqand \square_{[a,b]} \psi)}\text{ ($['] \ \sqcap_{t}$)}$$
$$\frac{
\left(
\scalebox{0.93}{
\begin{tabular}{c}
    $(b < a \land [x'=\theta \ \& \ \chi]\phi) \ \lor$\\
    $((\chi \lor [t := 0; \{x' = \theta, t' = 1\ \&\ (\chi \land t \leq a)\}; ?(t=a)]\psi)\ \land$ \\
    $\boxp{x' = \theta \ \& \ (\chi \land \lnot\psi)}\phi\ \land$\\
    $[t := 0; \{x' = \theta, t' = 1\ \&\ (\chi \land t \leq a)\}; ?(t=a)]$\\
    $\langle\{x' = \theta, t'=1\ \&\ (t \leq b)\}\rangle(\lnot \chi \lor \psi))$
\end{tabular}}
\right)
}{[x'=\theta \ \& \ \chi](\phi \sqor \dia_{[a,b]} \psi)}\text{ ($['] \ \sqcup_{t}$)}$$
\\
\scalebox{1.1}{\textbf{Non-deterministic Finite Repetition}}
$$\frac{\left(
\begin{tabular}{c}
    $(\phi\land(\lnot(a=0 \land b=0) \lor \psi)) \ \land$\\
    $[timed(\alpha^*,q)][\alpha](\phi \sqcap \square_{[max(0,a-q),b-q]} \psi)$ 
\end{tabular}
\right)}
{[\alpha^*](\phi \sqcap \square_{[a,b]} \psi)}\text{ ([*] $\sqcap_t$)}$$
$$\frac{(\psi \land (a=0\land b\geq a)) \lor (\phi \wedge [\alpha^*; \alpha](\phi \sqcup \lozenge_{[a,b]} \psi))}{[\alpha^*](\phi \sqcup \lozenge_{[a,b]} \psi)}\text{ ([$^{*n}$] $\sqcup_t$)}$$
$$\frac{\phi \implies [\alpha](\phi \sqcup \lozenge_{[a,b]} \psi)}{\forall^\alpha(\phi \implies [\alpha^*](\phi \sqcup \lozenge_{[a,b]} \psi))}\text{ (ind $\sqcup_t$)}$$
$$\frac{\forall^\alpha \forall r > 0 \ (\varphi(r) \implies \diap{\alpha}(\varphi(r-1)\sqand \square_{[a,b]} \psi))}{(\exists r.\varphi(r)) \wedge \psi \implies \diap{\alpha^*}((\exists r. r \leq 0 \land \varphi(r)) \sqand \square_{[a,b]} \psi)} \ (\textrm{con } \sqand_t)$$
$$\frac{
\left(
\begin{tabular}{c}
$(\psi \land (a=0\land b\geq a))\ \lor$\\
$\diap{timed(\alpha^*,q)}\diap{\alpha}(\phi \sqcup \dia_{[max(0,a-q),b-q]} \psi)$
\end{tabular}
\right)
}
{\diap{\alpha^*}(\phi \sqcup \dia_{[a,b]} \psi)}\text{ $(\diap{^*} \sqcup_t)$}$$
$$\frac{
\left(
\begin{tabular}{c}
$(\phi\land(\lnot(a=0 \land b=0) \lor \psi))\ \land$\\
$(\phi \lor \diap{\alpha^*; \alpha}(\phi \sqcap \square_{[a,b]} \psi))$
\end{tabular}
\right)
}
{\diap{\alpha^*}(\phi \sqcap \square_{[a,b]} \psi)}\text{ ($\diap{^{*n}}$ $\sqcap_t$)}$$

\end{center}
\caption*{Figure 2  (continued): Rule schemata of the proof calculus for \stdl.}
\end{figure}

Rules $(\boxp{\ }\leadsto)$ and $(\diap{}\leadsto)$ lift normalization of trace formulas to program necessities and possibilities respectively.

For assignment rule ([$:=]\ \sqcap_{t}$), the first disjunct expresses that for the time interval $[0,0]$, $\psi$ must hold initially, and after the execution of the program, must continue to hold in addition to $\phi$, as summarized in clause $\psi \wedge [x := \theta](\phi \wedge \psi)$. The second disjunct expresses that for any interval $[a,b]$ where $a > 0$ and $b \geq a$, only $\phi$ needs to be true after execution of the assignment, since assignment occurs in zero time, and as such, the trace of $x:=\theta$ would not be long enough to determine the satisfiability of $\square_{[a,b]}\psi$ for $a>0$. Similar reasoning is used for rule $(\textrm{[:=]} \ \sqcup_{t})$.

For the rules for test, as a reminder, a test trace only terminates if the test passes, and is a trace of the error state if the test fails. Rule ($[?] \ \sqcap_t$) encapsulates the fact that a trace of $?\chi$ satisfies $\phi \sqcap \square_{[a,b]} \psi$ if and only if 
\begin{itemize}
    \item for $a=0$ and $b=0$, its initial state satisfies $\phi \land \psi$ if the test passes, or satisfies only $\psi$ if the test fails;
    \item for $a>0$ and $b \geq a$, its initial state satisfies just $\phi$ if the test passes.
\end{itemize}
Note that there is no satisfaction requirement on the trace of a failing test (i.e., $\lnot \chi$ is true) when $a>0$ and $b\geq a$, since the test also occurs in zero time, and as such, the trace of $?\chi$ would not be long enough to determine the satisfiability of $\square_{[a,b]}\psi$ in this case. Similar reasoning is used for rule ($[?] \ \sqcup_t$).

Rules for ordinary differential equations (ODEs) look complex at first glance, but can be broken down in slightly simpler sub-rules. It is first important to remember that ODEs could have terminating traces or error traces, and the rules for ODEs need to account of both possibilities. With that in mind, we conclude that an error trace of $x'=\theta \ \& \ \chi$ satisfies $\phi \sqand \square_{[a,b]} \psi$ if and only if $a=0$ and $b\geq a$ implies $\psi$, as the second disjunct in rule ($['] \ \sqcap_{t}$). For non-error traces of $x'=\theta \ \& \ \chi$, we first transform the program into a program of the form $t:=0;\{x'=\theta, t'=1 \ \& \ (\chi \land t \leq a)\}; ?(t=a)$ and $\{x'=\theta, t'=1 \ \& \ (\chi \land t \leq b)\}$ to enforce that the differential equation first runs from time $t=0$ to time $t=a$ without any satisfaction requirements on $\psi$, followed by running the equation from time $t=a$ to $t=b$, during which $\psi$ must be true. In addition to this, $\phi$ must be true after running the program $x'=\theta \ \& \ \chi$, to deal with the case where the execution exits the differential equation before time $a$. This is summarized in the third disjunct of the rule ($['] \ \sqcap_{t}$). Note that the first disjunct of the rule deals with the case where $b<a$, so $\square_{[a,b]}\psi$ is defined to be trivially true, and any trace of $x'=\theta \ \& \ \chi$ need only satisfy $\phi$. Rule ($['] \ \sqcup_{t}$) expresses that a trace of $x'=\theta \ \& \ \chi$ satisfies $\phi \sqor \dia_{[a,b]}\psi$ if and only if either $b<a$ and the trace satisfies $\phi$ upon termination, or
\begin{itemize}
    \item the differential equation can evolve or has satisfied $\psi$ at time $t=a$ (as in the first conjunct of the rule), 
    \item if no trace of the differential equation can satisfy $\psi$, all traces must satisfy $\phi$ instead (as in the second conjunct of the rule), 
    \item either there does not exist a non-terminating trace of the differential equation -- transformed to a program as in rule ($['] \ \sqcap_{t}$) -- or such a trace satisfies $\psi$ between times $a$ and $b$.
\end{itemize}

Rule $([\cup]\ \xi)$ for non-deterministic choice is lifted directly from the corresponding rule $[\cup]\square$ in d$\mathcal{L}$.

\begin{figure*}[t]
    \centering
    \begin{prooftree}
        \AxiomC{\dots}
        \UnaryInfC{$\boxp{timed(\alpha_1,q_1)}(\boxp{timed(\alpha_2,q_2)}(\psi_3\sqor\dia_{[\max(0,1-q_1),\min(5-q_1,q_2)]} )\sqor \dia_{[1,\min(5,q_1)]}\psi_2)$}
        \RightLabel{$\boxp{'}\sqor_t$}
        \UnaryInfC{$\boxp{timed(\alpha_1,q_1)}(\boxp{timed(\alpha_2,q_2)}(\boxp{\alpha_3}(\psi_1 \sqor \dia_{[\max(0,\max(0,1-q_1)-q_2),5-q_1-q_2]}\psi_2)\sqor\dia_{[\max(0,1-q_1),\min(5-q_1,q_2)]} )\sqor \dia_{[1,\min(5,q_1)]}\psi_2)$}
        \RightLabel{$\boxp{;}\sqor_t$}
        \UnaryInfC{$\boxp{timed(\alpha_1,q_1)}(\boxp{\alpha_2;\alpha_3}(\psi_1 \sqor \dia_{[\max(0,1-q_1),5-q_1]}\psi_2)\sqor \dia_{[1,\min(5,q_1)]}\psi_2)$}
        \RightLabel{$\boxp{;}\sqor_t$}
        \UnaryInfC{$\boxp{\alpha}(\psi_1 \sqor \dia_{[1,5]}\psi_2)$}
    \end{prooftree}
    \caption{Proof sketch of our running example. The state formula $\psi_3$ can be further proven using rules from d$\mathcal{L}$.}
    \label{fig:case_study_proofsketch}
\end{figure*}

The rules for sequential composition were one of the most challenging aspects of \stdl. Indeed, sequential composition is the sole reason why we use normalized trace formulas in \stdl (see Section~\ref{sec:normalization}), and a primary reason why introduce the notion of recording the amount of time it takes for a hybrid program $\alpha$ to execute (see Section~\ref{sec:timing}). As a reminder here, for a hybrid program $\alpha$, executing $timed(\alpha,q)$ is equivalent to executing $\alpha$ while recording the amount of time the program takes to execute, following which the timed value is output as a fresh variable $q$. With that in mind, a trace of $\alpha; \beta$ satisfies $\phi \sqcap \square_{[a,b]} \psi$ if and only if 
\begin{itemize}
    \item for $a\leq q \leq b$, all traces of $\alpha$ satisfy $\square_{[a,q]}\psi$, and for traces of $\alpha$ that terminate at time $q$, all following traces of $\beta$ satisfy $\phi \sqand \square_{[0,b-q]}\psi$,
    \item for $a\leq b \leq q$, all traces of $\alpha$ satisfy $\square_{[a,b]}\psi$, and for traces of $\alpha$ that terminate at time $q$, all following traces of $\beta$ satisfy $\phi$,
    \item for $q \leq a \leq b$, for traces of $\alpha$ that terminate at time $q$, all following traces of $\beta$ satisfy $\phi \sqand \square_{[a-q,b-q]}\psi$.
\end{itemize}
These properties for the cases of the relative ordering of $a, b$ and $q$ are captured succinctly in rule ([;] $\sqcap_t$) using \texttt{min} and \texttt{max}. Rule ([;] $\sqcup_t$) is similar.

For the rules for non-deterministic finite repetition, let us first remember that as long as a trace $\alpha$ is finite, its finite repetition $\alpha^*$ will also be finite. In general, the rules attempt to reduce temporal properties of loops into either non-temporal properties of loops, or slightly more complex temporal properties on a program but without any loops. The idea here is to make the rules provable by ordinary, non-temporal induction. The key intuition behind rule ([*] $\sqcup_t$) comes from a very useful rule for repetition from d$\mathcal{L}$, which says that for a given trace formula $\pi$, the following is true:
$$\frac{[?true]\pi \land [\alpha^*;\alpha]\pi}{[\alpha^*]\pi} \text{ [;]}$$
Rule ([*] $\sqcap_t$) captures the fact that a trace of $\alpha^*$ satisfies $\phi\sqand\square_{[a,b]}\psi$ if and only if when $\alpha$ repeats zero times, $\phi$ is true, and if $a=0$ and $b=0$ then $\psi$ is true as well, or $\alpha^*$ runs first followed by $\alpha$, during which $\phi\sqand\square_{[a,b]}\psi$ with time interval shifting (similar to that for the sequential composition rules) holds. In rule ([*] $\sqcup_t$), the first disjunct expresses that $\dia_{[a,b]}\psi$ holds without repeating $\alpha$ if $a=0$ and $b\geq a$ and $\psi$ is true initially; the first conjunct of the second disjunct deals with the case where $\alpha$ repeats zero times and $\psi$ is false initially, while the second conjunct requires a sequential composition of $\alpha^*;\alpha$ to satisfy $\phi\sqor\dia_{[a,b]}\psi$ according to the rule ([;]) from d$\mathcal{L}$ mentioned above. Note that for rule ([*] $\sqcup_t$), the use of $\alpha^*;\alpha$ is equivalent to the use of $\alpha;\alpha^*$, and either variant of the sequential composition may be used. The rules (ind $\sqor_t$) and (con $\sqand_t$) extend the rules of induction (ind) and convergence (con) from d$\mathcal{L}$ to normalized trace formulas. Consistent with the rules from d$\mathcal{L}$, the rules (ind $\sqor_t$) and (con $\sqand_t$) are not equivalence relations (i.e., they do not have dual counterparts such that the negation of the premise and the conclusion is also a rule). The notation $\forall^\alpha$ from d$\mathcal{L}$ is a quantification over all variables that could be assigned by a hybrid program $\alpha$ in assignments or differential equations. Rule (ind $\sqor$) expresses that $\phi$ is inductive with exit clause $\dia_{[a,b]}\psi$ (i.e., $\phi$ is true after all traces $\sigma \in \hp{\alpha}$ where $\first\sigma \vDash \phi$, except when $\psi$ was true at some point in the interval $I$ during the execution of $\sigma$), while rule (con $\sqand$) shows that $\varphi$ is a variant of some trace $\sigma \in \hp{\alpha}$ (as in, its level $r$ decreases) during which $\psi$ is always true, and starting from an initial $r$, for an $r$ for which $\varphi(r)$ holds, it will ultimately be the case that $r\leq 0$ without $\psi$ being false if we repeat $\alpha^*$ often enough~\cite{CMU-CS-14-109}.

\subsubsection{Running Example: Traction Assist in Cars}\label{sec:casestudyproof} In this subsection, we present a proof sketch of the safety property for our example, highlighting how the property expressed in \stdl is reduced to an equivalent \dl formula to leverage the \dl calculus. For ease of understanding of the proof sketch, we only consider the second half of the hybrid program {\sf traction\_assist} (referred to as $\alpha$) -- though the application of the \stdl proof rules to the first half of the program is also fairly straightforward. We refer to the sequential composition components $\omega := -1$, $\varphi:=10$ and $\rho' = {\sf cruise(\omega)})$ in $\alpha$ as $\alpha_1$, $\alpha_2$, and $\alpha_3$ respectively. We then refer to the safety property $\phi$ as $\boxp{\alpha}(\psi_1 \sqor \dia_{[1,5]}\psi_2)$, where $\psi_1 \equiv \lnot{\sf no\_traction}$ and $\psi_2 \equiv (\rho < \rho_0)$.

For $\alpha \equiv (\alpha_1;(\alpha_2;\alpha_3))$, using the \stdl proof calculus, we get a proof tree of the form presented in Figure~\ref{fig:case_study_proofsketch}, where $\psi_3$ is obtained by applying rule $\boxp{'}\ \sqor_t$  with $a=\max(0,\max(0,1-q_1)-q_2)$ and $b=5-q_1-q_2$ as follows:
$$\psi_3 \equiv \boxp{\alpha_3}\left(\scalebox{0.93}{\begin{tabular}{c}
    $(b < a \land [\rho'=\omega \times k - \varphi \times j]\psi_1) \ \lor$\\
    $(([t := 0; \{x' = \theta, t' = 1\ \&\ (\chi \land t \leq a)\}; ?(t=a)]\psi_2)\ \land$ \\
    $\boxp{\rho'=\omega \times k - \varphi \times j \ \& \ (\lnot\psi_2)}\psi_1\ \land$\\
    $[t := 0; \{\rho'=\omega \times k - \varphi \times j, t' = 1\ \&\ (t \leq a)\}; ?(t=a)]$\\
    $\langle\{\rho'=\omega \times k - \varphi \times j, t'=1\ \&\ (t \leq b)\}\rangle\psi_2)$
\end{tabular}}\right)$$
The \stdl state formula $\psi_3$ can be proven further using solely the non-temporal rules from d$\mathcal{L}$.

\subsection{Soundness and Completeness of the \stdl Proof Calculus}
\begin{theorem}
The proof calculus for \stdl is sound. 
\end{theorem}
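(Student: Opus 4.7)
The plan is to establish soundness rule-by-rule: for each rule in the schemata, I would show that validity of the premise (in every state) implies validity of the conclusion with respect to Definitions~\ref{def:statesatisfaction}, \ref{def:tracesatisfaction}, and \ref{def:normalization}. Since \stdl conservatively extends \dl, all non-temporal rules retain their already-established soundness, so I would only need to verify the new temporal rules and the two normalization lifting rules $(\boxp{\ }\leadsto)$ and $(\diap{}\leadsto)$. The lifting rules are immediate from Lemma~\ref{lemma:soundnessnormalization}: since $\pi \leadsto \xi$ gives $\sigma \vDash \pi$ iff $\sigma \vDash \xi$ for every trace, premise and conclusion agree on every trace of $\alpha$, so they agree under both $\boxp{\alpha}$ and $\diap{\alpha}$.

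For assignment and test, the argument is short because by Definition~\ref{def:hybridprograms} every trace of $x:=\theta$ and every terminating trace of $?\chi$ executes in zero time. I would split on whether $[a,b] = [0,0]$ or $a>0$: in the first subcase the singleton-interval trace is exactly long enough to verify $\psi$ at its one time point, so the conclusion reduces to the obvious post-state obligation; in the second subcase the trace has length strictly less than $\|\square_{[a,b]}\psi\| = b$, so by the length condition in Definition~\ref{def:statesatisfaction} the temporal conjunct is vacuously discharged and only $\phi$ remains to be checked, matching the second disjunct of the rule. Rule $([\cup]\ \xi)$ is immediate from $\hp{\alpha \cup \beta} = \hp{\alpha} \cup \hp{\beta}$. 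The main work therefore concentrates on sequential composition, ODEs, and loops. For sequential composition, I would invoke Lemma~\ref{lemma:timedprograms} to transfer reasoning from $\alpha$ to $timed(\alpha,q)$ without altering any original program variable; the fresh $q$ records the duration $|\sigma_\alpha|$ of the $\alpha$-prefix of any $\sigma_\alpha \circ \sigma_\beta \in \hp{\alpha;\beta}$. By Definition~\ref{def:hybridprograms}, ${\sf dom}(\sigma_\alpha \circ \sigma_\beta) = {\sf dom}(\sigma_\alpha) \cup ({\sf dom}(\sigma_\beta) \oplus q)$, so evaluating $\square_{[a,b]}\psi$ on the concatenation decomposes into $\square_{[a,\min(b,q)]}\psi$ on $\sigma_\alpha$ and $\square_{[\max(0,a-q),b-q]}\psi$ on $\sigma_\beta$, with inverted intervals vacuous by the convention adopted after Definition~\ref{def:tracesatisfaction}. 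This decomposition is exactly what the $\min$/$\max$ encoding in $([;] \sqand_t)$ expresses, and it uniformly covers all three orderings $a \le q \le b$, $a \le b \le q$, and $q \le a \le b$. The post-state $\phi$ is discharged on $\sigma_\beta$ under the inner $\boxp{\beta}$, and the dual $(\sqor/\dia)$ rules are symmetric.

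For the ODE rules, I would interpret the compound program $t := 0;\{x' = \theta, t' = 1\ \&\ (\chi \land t \leq a)\};?(t=a)$ as fast-forwarding along the flow of $x' = \theta$ to time $a$ (failing if $\chi$ is violated before then), after which $\{x' = \theta, t' = 1\ \&\ (t \leq b)\}$ carries the flow from $a$ to $b$; each disjunct of $(['] \sqand_t)$ and $(['] \sqor_t)$ then corresponds exactly to one trace outcome, using existence and uniqueness of solutions already relied on in \dl. The unfolding rules $([*]\sqand_t)$ and $([^{*n}]\sqor_t)$ follow from $\hp{\alpha^*} = \hp{?{\sf true}} \cup \hp{\alpha^*;\alpha}$ combined with the sequential-composition case, while (ind $\sqor_t$) and (con $\sqand_t$) are lifted from the corresponding \dl rules by a secondary induction on the number of loop iterations, threading the normalized invariant through each unfolding. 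The hard part will be the sequential-composition and ODE rules: both must simultaneously manage temporal-interval shifting, respect the length requirement $|\sigma| \ge val(\first\sigma,\|\xi\|)$ from Definition~\ref{def:statesatisfaction}, and avoid any spurious interaction with the fresh timing variable $q$. A careful, exhaustive case analysis on the relative positions of $a$, $b$, and $q$ is essential to confirm that the $\min$/$\max$ encoding collapses all orderings into a single formula without over- or under-approximating the intended trace semantics.
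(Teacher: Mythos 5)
Your proposal is correct and follows essentially the same route as the paper's own proof: a rule-by-rule verification that inherits the non-temporal rules from \dl, discharges the normalization liftings via Lemma~\ref{lemma:soundnessnormalization}, splits on $[a,b]=[0,0]$ versus $a>0$ for the zero-time constructs, uses the timing variable $q=|\sigma_\alpha|$ (the content of the paper's Lemma~\ref{lemma:timingandsigmalength}) together with a case analysis on the orderings of $a$, $b$, $q$ for sequential composition, and handles loops by unfolding plus induction. The only difference is one of detail, not of method -- the paper additionally proves both directions of each equivalence rule, while your one-directional argument already suffices for soundness.
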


Since \stdl conservatively extends d$\mathcal{L}$, the soundness of the proof calculus of d$\mathcal{L}$ applies to \stdl as well. We present the proof of soundness for the rules introduced by the \stdl calculus.
\begin{proof}
We prove the soundness of individual rules. By induction on the proof trees, soundness of the entire proof system is a corollary.\\

$([:=]\ \sqand_t)$: For any state $v$, there is a unique terminating trace $\sigma \in \hp{x:=\theta}$ such that $\first\sigma=v$. From the trace semantics of hybrid programs, we know that $\sigma = (\hat{v},\hat{w})$ with $w=[x\mapsto val(v,\theta)]$. Therefore, $v \vDash [x:=\theta](\phi \sqand \square_{I}\psi)$ if and only if
\begin{itemize}
    \item for $I_s = 0$ and $I_e=0$, $w \vDash \phi$, $v \vDash \psi$, and $w \vDash \psi$, which is true if and only if $v \vDash \psi \land [x:=\theta](\phi \land \psi)$;
    \item for $I_s>0$ and $I_e \geq I_s$, $w \vDash \phi$, which is true if and only if $v \vDash [x:=\theta]\phi$.
\end{itemize}
In either case, it follows that $v \vDash [x:=\theta](\phi \sqand \square_{I}\psi)$ if and only if $v \vDash ((I_s=0 \land I_e=0)\land(\psi\land\boxp{x:=\theta}(\phi\land\psi))\lor((I_s>0 \land I_e\geq I_s)\land\boxp{x:=\theta}\phi)$.\\

$([:=]\ \sqor_t)$: Similar to the proof of soundness of $([:=]\ \sqand_t)$.\\

([?] $\sqand_t$): 
($\rightarrow$) Let $v \vDash ((a=0\land b=0)\land((\chi \wedge (\phi \lor \psi)) \vee (\neg \chi \wedge \psi)) \lor ((a>0\land b\geq a)\land(\lnot\chi \lor (\chi \land \phi)))$, and let $\sigma \in \hp{?\chi}$ with $\first\sigma = v$. If $v\vDash \lnot \chi$, then $\sigma = (\hat{v}, \hat{\Lambda})$ (i.e., $\sigma$ is the error trace). If $a=0$ and $b=0$, by our assumption, it follows that $v \vDash \psi$ (otherwise $v$ does not satisfy anything). Since $\sigma$ is a trace that occurs in zero time, it follows that $\sigma \vDash (\phi \sqand \square_{[a,b]}\psi)$. If, however, $v \vDash \chi$, then $\sigma = (\hat{v})$, and by our assumption, if $a>0$ and $b \geq a$, then $v\vDash \phi$ only (since the length of $\sigma$ is not long enough to determine the satisfiability of $\square_{[a,b]}\psi$). Therefore, $\sigma \vDash (\phi \sqand \square_{[a,b]}\psi)$ in this case as well.\\
($\leftarrow$) Conversely, assume that $v \vDash [?\chi](\phi \sqand \square_{[a,b]}\psi)$. Now, if $v \vDash \lnot \chi$, then $\sigma = (\hat{v}, \hat{\Lambda})$ (which is a non-terminating state), and $v\vDash \psi$ only when $a=0$ and $b=0$. Otherwise, $v\vDash \chi$ and $\sigma = (\hat{v})$, and therefore $v \vDash (\phi \land \psi)$ when $a=0$ and $b=0$, or $v \vDash \phi$ when $a>0$ and $b\geq a$. In either case, $v\vDash ((a=0\land b=0)\land((\chi \wedge (\phi \lor \psi)) \vee (\neg \chi \wedge \psi)) \lor ((a>0\land b\geq a)\land(\lnot\chi \lor (\chi \land \phi)))$.\\

([?] $\sqor_t$): Similar to the proof of soundness of [?] $\sqand_t$.\\

(['] $\sqand_t$): 
($\rightarrow$) Let $v \vDash (b < a \land [x'=\theta \ \& \ \chi]\phi) \lor (\lnot \chi \land (\lnot(a=0 \land b \geq a) \lor \psi)) \lor ([t:=0;\{x'=\theta, t'=1 \ \& \ (\chi \land t \leq a)\}; ?(t=a)] [\{x'=\theta, t'=1 \ \& \ (\chi \land t \leq b)\}]\psi \land [x'=\theta \ \& \ \chi]\phi)$, and let $\sigma \in \hp{x'=\theta \ \& \ \chi}$ such that $\first\sigma=v$. If $b<a$, it is only required that $\sigma \vDash \phi$ (since $\square_{[a,b]}\psi$ is trivially true in this case). If $v \vDash \lnot \chi$, then $\sigma$ is the non-terminating error trace $(\hat{v}, \hat{\Lambda})$ and $\sigma \vDash \square_{[a,b]}\psi$ (since $\psi$ is true when $a=0$ and $b\leq a$). Therefore, $\sigma\vDash (\phi \sqand \square_{[a,b]}\psi)$. If $v \vDash \chi$, however, then $\sigma = \{f\}$ for a real function $f$ defined on $D = [0, r]$ solution of $x' = \theta$, which satisfies $\chi$ on its domain of definition. Since $v \vDash [x'=\theta \ \& \ \chi]\phi$, for any $\sigma$ that terminates, $\sigma \vDash \phi$. For a $\sigma$ that does not terminate, $v \vDash [t:=0;\{x'=\theta, t'=1 \ \& \ (\chi \land t \leq a)\}; ?(t=a)] [\{x'=\theta, t'=1 \ \& \ (\chi \land t \leq b)\}]\psi$, and therefore $\sigma \vDash \square_{[a,b]}\psi$. In either case, $\sigma \vDash (\phi \sqand \square_{[a,b]}\psi)$.\\
($\leftarrow$) Conversely, assume $v \vDash [x'=\theta \ \& \ \chi](\phi \sqand \square_{[a,b]}\psi)$. By definition, there exists at least one trace $\sigma \in \hp{x'=\theta \ \& \ \chi}$ such that $\first\sigma = v$ and $\sigma \vDash \square_{[a,b]}\psi$. Now, if $v \vDash \lnot \chi$, then $v\vDash\psi$ if $a=0$ and $b\geq a$. Otherwise, for non-error traces of $x'=\theta \ \& \ \chi$,
\begin{itemize}
    \item for a terminating trace $\sigma \in \hp{x'=\theta \ \& \ \chi}$, we have that $\sigma \vDash \phi\sqand\square_{[a,b]}\psi$, and in particular, we have that $\sigma\vDash\phi$
    \item for any trace $\sigma \in \hp{x'=\theta \ \& \ \chi}$ (terminating or otherwise), since $\sigma \vDash \phi\sqand\square_{[a,b]}\psi$, in particular we have that $\sigma\vDash \square_{[a,b]}\psi$, and hence, $\sigma\vDash [t:=0;\{x'=\theta, t'=1 \ \& \ (\chi \land t \leq a)\}; ?(t=a)] [\{x'=\theta, t'=1 \ \& \ (\chi \land t \leq b)\}]\psi$.
\end{itemize}
Therefore, $v \vDash (b < a \land [x'=\theta \ \& \ \chi]\phi) \lor (\lnot \chi \land (\lnot(a=0 \land b \geq a) \lor \psi)) \lor ([t:=0;\{x'=\theta, t'=1 \ \& \ (\chi \land t \leq a)\}; ?(t=a)] [\{x'=\theta, t'=1 \ \& \ (\chi \land t \leq b)\}]\psi \land [x'=\theta \ \& \ \chi]\phi)$.\\

(['] $\sqor_t$): 
($\rightarrow$) Assume $v\vDash (b < a \land [x'=\theta \ \& \ \chi]\phi) \lor ((\chi \lor [t := 0; \{x' = \theta, t' = 1\ \&\ (\chi \land t \leq a)\}; ?(t=a)]\psi) \land \boxp{x' = \theta \ \& \ (\chi \land \lnot\psi)}\phi\land [t := 0; \{x' = \theta, t' = 1\ \&\ (\chi \land t \leq a)\}; ?(t=a)] \langle\{x' = \theta, t'=1\ \&\ (t \leq b)\}\rangle(\lnot \chi \lor \psi))$ and let $\sigma \in \hp{x'=\theta \ \& \ \chi}$ such that $\first\sigma=v$. If $b<a$, then $v\vDash [x'=\theta \ \& \ \chi]\phi$. If $v\vDash \lnot \chi$, then $\sigma$ is the non-terminating trace $(\hat{v}, \hat{\Lambda})$ such that $\sigma \vDash \dia_{[a,b]}\psi$. Therefore, $\sigma \vDash \phi \sqor \dia_{[a,b]}\psi$. If $v \vDash \chi$, however, then $\sigma = \{f\}$ for a real function $f$ defined on $D = [0, r]$ solution of $x' = \theta$, which satisfies $\chi$ on its domain of definition. If $\sigma \vDash \dia_{[a,b]}\psi$, then by definition, $\sigma \vDash \phi \sqor \dia_{[a,b]}\psi$. Otherwise, if $\sigma$ is terminating and no state of $\sigma$ satisfies $\psi$, we have that $\sigma \in \hp{x'=\theta \ \&\ (\chi \land \lnot \psi}$. From our assumption, we have that $\sigma \vDash \phi$, and as such, $\sigma \vDash \phi \sqor \dia_{[a,b]}\psi$. Lastly, for the case case where $\sigma \nvDash \dia_{[a,b]}\psi$, we cannot have a non-terminating $\sigma$. This is because such a $\sigma$ would verify $\chi \land \lnot \psi$ in all states, and could follow any trace $\sigma_\alpha \in \hp{x' = \theta}$, contradicting $v \vDash [t := 0; \{x' = \theta, t' = 1\ \&\ (\chi \land t \leq a)\}; ?(t=a)] \langle\{x' = \theta, t'=1\ \&\ (t \leq b)\}\rangle(\lnot \chi \lor \psi)$ in the process.\\
($\leftarrow$) Conversely, let $v\vDash [x'=\theta \ \& \ \chi](\phi \sqor \dia_{[a,b]}\psi)$, and let $\sigma \in \hp{x'=\theta \ \& \ \chi}$ such that $\first\sigma=v$. First, if $b<a$, the $\dia_{[a,b]}\psi$ is vacuously false, and since $\sigma \vDash (\phi \sqor \dia_{[a,b]}\psi)$, it must be the case that $\sigma\vDash \phi$. Otherwise, if $v \vDash \lnot \chi$, the only trace of $\hp{x'=\theta \ \& \ \chi}$ such that $\first\sigma = v$ is the trace $(\hat{v},\hat{\Lambda})$. Since this trace satisfies $\dia_{[a,b]}\psi$, we have that $v \vDash [t := 0; \{x' = \theta, t' = 1\ \&\ (\chi \land t \leq a)\}; ?(t=a)]\psi$. Therefore, in all cases, we have $v \vDash \chi \lor [t := 0; \{x' = \theta, t' = 1\ \&\ (\chi \land t \leq a)\}; ?(t=a)]\psi$. To prove that $v\vDash \boxp{x' = \theta \ \& \ (\chi \land \lnot\psi)}\phi$, we need only consider terminating tracing. Let $\sigma$ be a terminating trace of $\hp{x'=\theta \ \&\ (\chi \land \lnot \psi)}$. Then, in particular, $\sigma \in \hp{x'=\theta \ \&\ \chi}$, and as such, $\sigma \vDash \phi \sqor \dia_{[a,b]}\psi$. Since $\sigma$ also has $\lnot \psi$ as domain constraint, it follows that $\sigma \nvDash \dia_{[a,b]}\psi$, and as such, $\sigma \vDash \phi$. Finally, to prove the third conjunct of the rule, let us first consider the case where $v\vDash [t := 0; \{x' = \theta, t' = 1\ \&\ (\chi \land t \leq a)\}; ?(t=a)] \langle\{x' = \theta, t'=1\ \&\ (t \leq b)\}\rangle\lnot \chi$. In this case, there is no non-terminating trace $\sigma \in \hp{x'=\theta \ \&\ \chi}$ such that $\first\sigma = v$. For the case where $v\vDash [t := 0; \{x' = \theta, t' = 1\ \&\ (\chi \land t \leq a)\}; ?(t=a)] \langle\{x' = \theta, t'=1\ \&\ (t \leq b)\}\rangle \psi$, there exists a unique non-terminating trace $\sigma \in \hp{x'=\theta \ \&\ \chi}$ such that $\first\sigma = v$. By our assumption, we have that $\sigma \vDash \dia_{[a,b]}\psi$. This means that $\psi$ has to be true in some state that is reached by trace $\sigma$, and this notion is logically equivalent to $[t := 0; \{x' = \theta, t' = 1\ \&\ (\chi \land t \leq a)\}; ?(t=a)] \langle\{x' = \theta, t'=1\ \&\ (t \leq b)\}\rangle \psi$. From both of the cases mentioned above, we get $v\vDash [t := 0; \{x' = \theta, t' = 1\ \&\ (\chi \land t \leq a)\}; ?(t=a)] \langle\{x' = \theta, t'=1\ \&\ (t \leq b)\}\rangle(\lnot \chi \lor \psi)$.\\

$([\ ]\leadsto)$, $(\diap{}\leadsto)$: Soundness of rules $([\ ]\leadsto)$ and $(\diap{}\leadsto)$ is a corollary of Lemma~\ref{lemma:soundnessnormalization}.\\

$([\cup]\ \xi)$: For any state $v$, we have $v \vDash [\alpha]\xi \land [\beta]\xi$ if and only if for all traces $\sigma_\alpha \in \hp{\alpha}$ such that $\first \sigma_\alpha = v$, it follows that $\sigma_\alpha \vDash \xi$, and for all traces $\sigma_\beta \in \hp{\beta}$ such that $\first \sigma_\beta = v$, it follows that $\sigma_\beta \vDash \xi$. This is true if and only if for all traces $\sigma \in \hp{\alpha \cup \beta}$ such that $\first \sigma = v$, it follows that $\sigma \vDash \xi$, which in turn is true if and only if $v \vDash [\alpha \cup \beta]\xi$.\\  

Before we can prove soundness for the rules for sequential composition, we need the following lemma describing the relationship between $q$ and $|\sigma|$ for $\sigma \in \hp{timed(\alpha,q)}$:
\begin{lemma}[Timing and the lengths of traces of hybrid programs]\label{lemma:timingandsigmalength}
Given a hybrid program $\alpha$ and a trace $\sigma \in \hp{timed(\alpha,q)}$ that the execution of $timed(\alpha,q)$ follows, the time taken to execute the program is equal to the length of $\sigma$ (i.e., $q = |\sigma|$).
\end{lemma}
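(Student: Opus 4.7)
The plan is to proceed by structural induction on the hybrid program $\alpha$, using Definition~\ref{def:timing} to unfold $time(\cdot)$ and Definition~\ref{def:tracelength} to compute $|\sigma|$. The crucial preliminary observation is that the lemma as stated (namely $q = |\sigma|$ at the end) is too weak to carry the induction across sequential composition, because inside $\alpha;\beta$ the subprogram $time(\beta)$ begins in a state where $q$ already equals $|\sigma_\alpha|$, not $0$. So I would first strengthen the statement to a relative form: for any $\sigma' \in \hp{time(\alpha)}$ starting in a state $v$, and at every position $(i,t)$ of $\sigma'$, we have $\sigma'(i,t)(q) - v(q) = $ (the physical time elapsed from the start up to $(i,t)$), and in particular for terminating or infinite $\sigma'$ this gives the last-state value $v(q) + |\sigma'|$. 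The original lemma then follows immediately, since $timed(\alpha,q)$ begins with $q := 0$ and so enters $time(\alpha)$ in a state with $v(q) = 0$.

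For the base cases, I would handle assignment and test trivially: $time(x:=\theta) = x:=\theta$ and $time(?\chi) = ?\chi$ produce traces of length $0$, and since $q$ is fresh in $\alpha$ neither construct touches $q$. For error traces of tests, length is also $0$. The interesting base case is the ODE: $time(x' = \theta\ \&\ \chi) = \{x'=\theta,\ q'=1\ \&\ \chi\}$. Since $q' = 1$ throughout the continuous evolution and $q$ is unchanged by the domain constraint, a solution defined on $[0,r]$ (or $[0,+\infty)$) increases $q$ by exactly the elapsed parameter value, matching the definition of $|\sigma|$ in Definition~\ref{def:tracelength}; for the error branch $(\hat{v},\hat{\Lambda})$, length is $0$ and $q$ is unchanged, so equality is trivial.

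For the inductive cases, non-deterministic choice is direct: any trace of $time(\alpha) \cup time(\beta)$ is a trace of one branch, and the inductive hypothesis applies. For sequential composition, a trace is $\sigma_\alpha \circ \sigma_\beta$ with $\sigma_\alpha \in \hp{time(\alpha)}$ and $\sigma_\beta \in \hp{time(\beta)}$, and I would use the strengthened IH twice: across $\sigma_\alpha$ the variable $q$ grows by $|\sigma_\alpha|$, and starting from that state across $\sigma_\beta$ it grows by an additional $|\sigma_\beta|$, totaling $|\sigma_\alpha| + |\sigma_\beta| = |\sigma_\alpha \circ \sigma_\beta|$ (this last equality using the time-shift built into the semantics of $\circ$ in Definition~\ref{def:hybridprograms}, together with the convention that $|\sigma_\alpha \circ \sigma_\beta| = |\sigma_\alpha|$ when $\sigma_\alpha$ does not terminate). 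Finite repetition $\alpha^*$ is then handled by a secondary induction on the iteration count $n$, with the base case $n=0$ being $?\mathsf{true}$ (length $0$, $q$ unchanged) and the step case $\alpha^n;\alpha$ reducing to the sequential composition case.

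The main obstacle is really just noticing that one must state and prove the strengthened relative form of the invariant before the induction goes through; after that, each case is a direct unfolding of $time$ and the trace semantics, and the non-trivial content is concentrated in the ODE base case, where soundness hinges on the fact that $q$ is fresh in $\alpha$ (so that the differential system $\{x'=\theta,\ q'=1\}$ admits the same $x$-component solutions as the original ODE) — a fact that is exactly what Lemma~\ref{lemma:timedprograms} encapsulates.
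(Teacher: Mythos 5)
Your proposal is correct and follows the same route the paper intends: the paper's own proof is a one-line appeal to Definitions~\ref{def:tracelength}, \ref{def:timing}, and \ref{def:hybridprograms}, and your structural induction is exactly the elaboration of that unfolding. The one substantive addition you make --- strengthening the statement to a relative invariant ($q$ grows by the elapsed time from its initial value, so that the induction survives sequential composition and iteration, where $time(\beta)$ starts with $q \neq 0$) --- is a genuine and necessary detail that the paper glosses over, and your treatment of the ODE case via $q'=1$ and of non-terminating/error traces is consistent with the paper's semantics.
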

\begin{proof}
Lemma~\ref{lemma:timingandsigmalength} is a direct consequent of Definition~\ref{def:tracelength} and Definition~\ref{def:timing}, keeping in mind the semantics of hybrid programs from Definition~\ref{def:hybridprograms}.
\end{proof}

We can now continue with the proof of soundness of the \stdl calculus.\\

([;] $\sqand_t$): 
($\rightarrow$) Assume that for some state $v$, it is true that $v \vDash [timed(\alpha,q)]([\beta](\phi \sqcap \square_{[\max(0,a-q),b-q]} \psi)\sqcap \square_{[a,\min(b,q)]} \psi)$, and let $\sigma|_{\var - \{q\}} \in \hp{\alpha;\beta}$ such that $v=\first\sigma$. If $\sigma|_{\var - \{q\}} \in \hp{\alpha}$ is an infinite trace, then by Lemma~\ref{lemma:timedprograms}, $\sigma|_{\var - \{q\}} \in \hp{timed(\alpha,q)}$ and by the assumption, $\sigma|_{\var - \{q\}} \vDash \square_{[a,\min(b,+\infty)]} \psi \equiv \square_{[a,b]}\psi$. Otherwise, there exists a terminating trace $\sigma_\alpha \in \hp{\alpha}$ such that $|\sigma_\alpha| = q$ (by Lemma~\ref{lemma:timingandsigmalength}) and a trace (infinite or otherwise) $\sigma_\beta \in \hp{\beta}$ such that $\sigma = \sigma_\alpha \circ \sigma_\beta$. By the assumption, $\sigma_\alpha|_{\var - \{q\}} \vDash \square_{[a,\min(b,q)]} \psi$ and $\sigma_\beta \vDash (\phi \sqcap \square_{[\max(0,a-q),b-q]} \psi$. Now, depending on the value of $q$, there are three possible orders of $a$, $b$, and $q$: $a\leq q \leq b$, $q\leq a \leq b$, and $a\leq b \leq q$. Keeping in mind the fact that for $b<a$, $\square_{[a,b]}\psi$ is vacuously true while $\dia_{[a,b]}\psi$ is vacuously false, we can see that
\begin{itemize}
    \item for $a\leq q \leq b$, $\sigma_\alpha \vDash \square_{[a,q]} \psi$ and $\sigma_\beta \vDash (\phi \sqcap \square_{[0,b-q]} \psi)$; 
    \item for $q\leq a \leq b$, $\sigma_\beta \vDash (\phi \sqcap \square_{[a-q,b-q]} \psi)$;
    \item for $a \leq b \leq q$, $\sigma_\alpha \vDash \square_{[a,b]} \psi$ and $\sigma_\beta\vDash\phi$.
\end{itemize}
In all cases, $\sigma_\alpha|_{\var - \{q\}} \circ \sigma_\beta \vDash (\phi \sqand \square_{[a,b]}\psi)$. By Lemma~\ref{lemma:timedprograms}, we get $\sigma_\alpha|_{\var - \{q\}} \circ \sigma_\beta \equiv \sigma$. Therefore, $\sigma \vDash (\phi \sqand \square_{[a,b]}\psi)$.\\
($\leftarrow$) Conversely, let $v \vDash [\alpha;\beta](\phi \sqand \square_{[a,b]}\psi)$. Let $\sigma_\alpha \in \hp{\alpha}$ such that $v = \first\sigma_\alpha$. If $\sigma_\alpha$ is infinite, then $\sigma_\alpha \in \hp{\alpha; \beta}$, and as such, $\sigma_\alpha \vDash \square_{[a,b]}\psi$. Otherwise, let $\sigma_\beta \in \hp{\beta}$ such that $\sigma_\alpha \circ \sigma_\beta$ is well-defined. Again, since $|\alpha| = q$ (by Lemma~\ref{lemma:timingandsigmalength}), there are three possible orders of $a$, $b$, and $q$. It is easy to see that for any relative ordering of $a$, $b$, and $q$, $\sigma_\alpha \vDash \square_{[a,\min(b,q)]}\psi$, and $\sigma_\beta\vDash(\phi \sqand \square_{[\max(0,a-q),b-q]}\psi)$. There is a universal quantifier on $\sigma_\beta$, so $\sigma_\alpha \vDash [\beta](\phi \sqcap \square_{[\max(0,a-q),b-q]} \psi)$. Keeping in mind that the choice of $\sigma_\alpha$ was arbitrary, and by using Lemma~\ref{lemma:timedprograms}, it follows that $v \vDash [timed(\alpha,q)]([\beta](\phi \sqcap \square_{[\max(0,a-q),b-q]} \psi)\sqcap \square_{[a,\min(b,q)]} \psi)$.\\

([;] $\sqor_t$): Similar to the proof of soundness of [;] $\sqand_t$.\\

([*] $\sqand_t$): 
($\rightarrow$) Assume $v\vDash (\phi\land(\lnot(a=0 \land b=0) \lor \psi)) \wedge [timed(\alpha^*,q)][\alpha](\phi \sqcap \square_{[max(0,a-q),b-q]} \psi)$ and let $\sigma \in \hp{\alpha^*}$. If $\sigma \in \hp{\alpha^0} = \hp{?\true}$, then $\sigma = (\hat{v})$. Since we have $v \vDash \phi$ and $v\vDash [timed(\alpha^0,q)][\alpha]\phi \sqcap \square_{[a,b]} \psi$ (note that for $timed(\alpha^0,q)$, we have $q=0$, and such a trace satisfies $\psi$ only if $a=0$ and $b=0$), it follows that $\sigma \vDash (\phi \sqand \square_{[a,b]}\psi)$. Otherwise, there exits $n \geq 1$ such that $\sigma = \sigma_1 \circ \dots \circ \sigma_n$, where $\sigma_i \in \hp{\alpha}$ for any $i \in \{1, \dots, n\}$. Then, $\sigma_1 \circ \dots \circ \sigma_{n-1} \in \hp{\alpha^*}$ and $\sigma_n \in \hp{\alpha}$. By our assumption, $\sigma_n \vDash \square_{[max(0,a-q),b-q]} \psi$, and as such, $\sigma \vDash \square_{[a,b]}\psi$. Furthermore, if $\sigma$ terminates, then $\sigma_n$ terminates, and as such, we get $\sigma_n \vDash \phi$ from our assumption. Therefore, we have $\sigma \vDash \phi \sqand \square_{[a,b]}\psi$.\\
($\leftarrow$) Conversely, if $v\vDash [\alpha^*](\phi \sqcap \square_{[a,b]} \psi) $, then in particular, we have $(\hat{v}) \vDash (\phi \sqcap \square_{[a,b]} \psi)$. This implies that $v \vDash \phi$. Now, let $\sigma_\alpha \in \hp{\alpha^*}$ such that $\first\sigma_\alpha=v$ and $\sigma_\alpha$ terminates, and let $\sigma_\beta \in \hp{\alpha}$ such that $\first\sigma_\beta = \last\sigma_\alpha$. Then, $\sigma_\alpha \circ \sigma_\beta \in \hp{\alpha^*}$. Therefore, by our assumption, we get $\sigma_\alpha \circ \sigma_\beta \vDash \phi \sqcap \square_{[a,b]} \psi$. In particular, for a value of $q$ returned from $timed(\alpha^*,q)$ for $\sigma_\alpha$, we get $\sigma_\beta \vDash \phi \sqcap \square_{[max(0,a-q),b-q]} \psi$. Therefore, it follows that $v\vDash (\phi\land(\lnot(a=0 \land b=0) \lor \psi)) \wedge [timed(\alpha^*,q)][\alpha](\phi \sqcap \square_{[max(0,a-q),b-q]} \psi)$.\\

([$^{*n}$] $\sqor_t$): From the semantics of hybrid programs, we know that $\hp{\alpha^*} = \hp{?\true \cup \alpha^*;\alpha}$. Therefore, $v\vDash \hp{\alpha^*}(\phi \sqor \dia_{[a,b]}\psi)$ if and only if $v\vDash \hp{?\true \cup \alpha^*;\alpha}(\phi \sqor \dia_{[a,b]}\psi)$. By rule $([\cup]\ \xi)$, this is true if and only if $v\vDash \hp{?\true}(\phi \sqor \dia_{[a,b]}\psi)$ and $v\vDash \hp{\alpha^*;\alpha}(\phi \sqor \dia_{[a,b]}\psi)$. By rule $([?]\ \sqcup_t)$, $v\vDash \hp{?\true}(\phi \sqor \dia_{[a,b]}\psi)$ is itself equivalent to $v \vDash ((a=0\land b=0)\land (\phi \lor \psi)) \lor ((a>0\land b\geq a)\land \phi)$. Therefore, $v \vDash \hp{\alpha^*}(\phi \sqor \dia_{[a,b]}\psi)$ if and only if $v \vDash ((a=0\land b=0)\land (\phi \lor \psi)) \lor ((a>0\land b\geq a)\land \phi) \land \hp{\alpha^*;\alpha}(\phi \sqor \dia_{[a,b]}\psi)$. However, $v\vDash ((a=0\land b\geq a)\land \psi)$ implies $v\vDash \hp{\alpha^*;\alpha}(\phi \sqor \dia_{[a,b]}\psi)$. As such, we get $v\vDash \hp{\alpha^*}(\phi \sqor \dia_{[a,b]}\psi)$ if and only if $v\vDash (\psi \land (a=0\land b\geq a)) \lor (\phi \wedge [\alpha^*; \alpha](\phi \sqcup \lozenge_{[a,b]} \psi)$.\\

(ind $\sqcup_t$): Assume $v\vDash (\phi \implies [\alpha](\phi \sqcup \lozenge_{[a,b]} \psi))$ and $v\vDash \phi$. Let $\sigma \in \hp{\alpha^*}$. The proof is trivial for the case where $\sigma = (\hat{v})$. For any other $\sigma$, there exists $n \geq 1$ such that $\sigma  = \sigma_1 \circ \dots \circ \sigma_n$. If there exists a $\sigma_i$ such that $\sigma_i \vDash \dia_{[a,b]}\psi$, as is always the case where $\sigma$ is non-terminating, we get that $\sigma \vDash \dia_{[a,b]}\psi$. Otherwise, for any $i \in \{1, \dots, n\}$, since $\sigma_i \in \hp{\alpha}$, instantiating the premise using the universal $\forall^\alpha$ (this is necessary since the premise may behave differently for different states otherwise), if $\first\sigma_i\vDash \phi$, we get $\sigma_i \vDash \phi \sqor \dia_{[a,b]}\psi$. However, since $\sigma_i \nvDash \dia_{[a,b]}\psi$, we have $\last \sigma_i \vDash \phi$. Since $v\vDash \phi$, by induction on $i$, we get $\sigma \vDash \phi$, which leads to the conclusion of the rule.\\

(con $\sqcap_t$): Assume $v \vDash \forall^\alpha \forall r > 0 \ (\varphi(r) \implies \diap{\alpha}(\varphi(r-1)\sqand \square_{[a,b]} \psi))$ and $v\vDash (\exists r. \varphi(r)) \wedge \psi$. Then, there exists a $d \in \mathbb{R}$ such that $v \vDash \varphi(d)$. We prove the rule using well-founded induction on $d$. If $d \leq 0$, we have $(\hat{v}) \vDash ((\exists r. r \leq 0 \land \varphi(r)) \sqand \square_{[a,b]} \psi)$, where $(\hat{v})\in\hp{\alpha^*}$ for the case where $\alpha$ repeats zero times. If, however, $d>0$, we know that $v \vDash \varphi(d)$ and $v\vDash \varphi(d) \implies \diap{\alpha}(\varphi(d-1)\sqand \square_{[a,b]} \psi)$. Therefore, there exists an trace $\sigma_1 \in \hp{\alpha}$ such that $\sigma_1 \vDash(\varphi(d-1)\sqand \square_{[a,b]} \psi) $. Since $\last\sigma_1 \vDash \varphi(d-1)$, if $d-1 \leq 0$, we are done with the proof; otherwise, we can construct a similar $\sigma_2$ such that $\sigma_2 \vDash (\varphi(d-2)\sqand \square_{[a,b]} \psi)$. We can continue until $d \leq 0$, and this induction is well-founded because $d$ decrease by 1 for each step. We have thus constructed $\sigma = \sigma_1 \circ \dots \circ \sigma_n \in \hp{\alpha^*}$ such that each $\sigma_i \vDash \square_{[a,b]}\psi$ -- and thus $\sigma \vDash\square_{[a,b]}\psi$ -- and $\last\sigma_n = \last\sigma \vDash (\exists r. r \leq 0 \land \varphi(r))$. Therefore, we have $\sigma \vDash (\exists r. r \leq 0 \land \varphi(r)) \sqand \square_{[a,b]} \psi$.
\end{proof}

\begin{theorem}
\stdl is non-axiomatizable.
\end{theorem}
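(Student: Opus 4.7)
The plan is to reduce the non-axiomatizability of \stdl to the already-established non-axiomatizability of \dl. Recall that Platzer has shown that \dl is not axiomatizable, essentially because one can encode the standard model of natural number arithmetic using hybrid programs and state formulas, and the set of true first-order arithmetic statements is not recursively enumerable (by G\"odel/Tarski). Since the abstract and Section~\ref{sec:stdl} of this paper explicitly build \stdl as a \emph{conservative extension} of \dl, the natural strategy is to show that any sound and complete r.e.\ axiomatization of \stdl would restrict to a sound and complete r.e.\ axiomatization of \dl, giving a contradiction.

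Concretely, I would proceed in three steps. First, I would make precise the conservative-extension property: for every \dl formula $\phi$ (built using only the fragment of syntax that \dl shares with \stdl, namely the state-formula fragment without the new temporal constructs $\square_{[a,b]}$ and $\dia_{[a,b]}$), the trace-based semantics of Definition~\ref{def:statesatisfaction} agrees with the transition semantics of \dl. In particular, since the clause for $v \vDash [\alpha]\phi$ when $\phi$ is a state formula is defined via $\last\sigma$ on terminating traces, and the trace semantics of hybrid programs in \stdl restricted to the variables of $\alpha$ matches that of \dl, we get that $\phi$ is valid in \stdl iff $\phi$ is valid in \dl. This is essentially a semantic lemma that I would prove by induction on the structure of $\phi$ and $\alpha$.

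Second, assume for contradiction that \stdl admits a sound and complete recursively enumerable proof calculus $\mathcal{P}$. Then by enumerating $\mathcal{P}$'s theorems and syntactically filtering out those formulas that lie in the \dl fragment, we obtain an r.e.\ set $\mathcal{P}'$ of \dl formulas. By soundness of $\mathcal{P}$ and the conservative-extension lemma, every formula in $\mathcal{P}'$ is \dl-valid; by completeness of $\mathcal{P}$ together with the same lemma, every \dl-valid formula appears in $\mathcal{P}'$. Hence $\mathcal{P}'$ would be a sound and complete r.e.\ axiomatization of \dl.

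Third, I would cite Platzer's theorem that \dl is non-axiomatizable, i.e.\ no such $\mathcal{P}'$ exists, yielding the desired contradiction. The main obstacle I expect is step one: our trace semantics (with intervals shifted, and with the explicit length/definedness condition $|\sigma| \geq \mathit{val}(\first\sigma, \|\pi\|)$) differs superficially from \dl's transition semantics, so I would need to carefully check that for a pure state-formula postcondition the length condition is trivial (since $\|\phi\| = 0$ by Definition~\ref{def:traceformulalength}) and that $\last\sigma$ coincides with the \dl post-state. Once that bookkeeping is dispatched, the reduction to Platzer's result is immediate.
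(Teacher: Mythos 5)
Your proposal is correct and follows essentially the same route as the paper: reduce to Platzer's non-axiomatizability of \dl via the fact that \stdl conservatively extends \dl. The paper's own proof is just a two-sentence appeal to this reduction, so your more careful treatment of the conservativity lemma and the restriction of a hypothetical r.e.\ axiomatization to the \dl fragment fills in exactly the bookkeeping the paper leaves implicit.
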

\begin{proof}
Discrete and continuous fragments of d$\mathcal{L}$ were proved to not be axiomatizable in \cite{platzer2008differential,platzer2012logics}. Since \stdl extends d$\mathcal{L}$, discrete and continuous fragments of \stdl are also non-axiomatizable. Therefore, in general, \stdl is non-axiomatizable.
\end{proof}

Even though \stdl is non-axiomatizable in general, its proof system restricted programs without repetitions is complete relative to first-order logic of differential equations (i.e., first-order real arithmetic augmented with formulas expressing properties of differential equations)~\cite{platzer2008differential,platzer2012logics}, as was shown to be the case for d$\mathcal{L}$.
\begin{theorem}
The proof calculus for \stdl restricted to programs without non-deterministic finite repetitions is complete relative to first-order logic of differential equations.
\end{theorem}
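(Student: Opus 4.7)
The plan is to adapt the relative completeness argument of d$\mathcal{L}$ to \stdl by showing that the proof calculus, together with a first-order logic of differential equations (FOD) oracle, can reduce any valid \stdl formula (without $\alpha^*$) to an equivalent valid FOD formula, which the oracle then discharges. Since the inherited d$\mathcal{L}$ proof calculus is relatively complete, it suffices to establish that the new \stdl rules (Figure 2 excluding loop rules) can eliminate every temporal operator and every composite hybrid program construct, reducing the formula to the purely d$\mathcal{L}$ fragment.

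First I would introduce a well-founded complexity measure on \stdl formulas. A natural candidate is lexicographic: the syntactic size of the hybrid programs appearing under modalities (counting $\cup$ and $;$ nodes), followed by the structural complexity of the trace formula (negation depth, followed by normalization form, followed by the size of the state formulas). Because non-deterministic finite repetition is excluded, all hybrid programs in scope are built from $:=, ?\chi, x'=\theta\,\&\,\chi, \cup, ;$ only, so the program size is finite and induction on it terminates. The induction hypothesis then asserts that every valid \stdl formula of strictly smaller measure is derivable from FOD.

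Next I would proceed by case analysis on the outermost program modality, after using Lemma on existence of normalized trace formulas together with rules $(\boxp{\ }\leadsto)$ and $(\diap{\ }\leadsto)$ to ensure the trace formula is in normalized form $\xi$. For assignment and test, rules $([:=]\,\sqand_t), ([:=]\,\sqor_t), ([?]\,\sqand_t), ([?]\,\sqor_t)$ directly translate $\xi$ into a Boolean combination of state formulas and single-step modalities, strictly decreasing the measure. For $[\alpha \cup \beta]\xi$, rule $([\cup]\,\xi)$ splits into two premises with strictly smaller programs. For ODEs, rules $(['] \,\sqand_t)$ and $(['] \,\sqor_t)$ encode the temporal content as a Boolean combination of $\dl$ formulas over modified ODE systems with clock variables; each of the resulting conjuncts sits in the inherited non-temporal fragment, and by the d$\mathcal{L}$ relative completeness theorem they reduce to FOD. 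For sequential composition, rules $([;]\,\sqand_t)$ and $([;]\,\sqor_t)$ transform $[\alpha;\beta](\phi\sqand\square_{[a,b]}\psi)$ into $[\mathit{timed}(\alpha,q)](\ldots)$ with a nested $[\beta]$ modality; by Lemma on equality of timed and untimed hybrid programs, the semantics on non-$q$ variables is preserved, and by Lemma on timing and trace lengths, the $\min$/$\max$ interval shifts correctly capture the split of $[a,b]$ between $\alpha$ and $\beta$. Although nesting increases, the underlying programs $\alpha$ and $\beta$ are each strictly smaller than $\alpha;\beta$, so the measure still decreases.

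The main obstacle will be the sequential composition case together with the ODE case, because both introduce auxiliary variables ($q$ and time $t$) and interval arithmetic involving $\min$ and $\max$. I would need to verify that the resulting formulas remain expressible within FOD: in particular, that $\mathit{timed}(\alpha,q)$ instantiates to a program whose reachability relation is FOD-definable (which follows from the d$\mathcal{L}$ expressibility lemma applied to the clock-augmented program), and that the interval-shifted temporal formulas do not reintroduce temporal operators in a way that blocks the induction. A careful accounting of the measure under the $([;]\,\sqand_t)$ rewriting, showing that the sum of program sizes under the two inner modalities is strictly less than the size of $\alpha;\beta$ under the original modality, is the delicate step. Once this is established, every remaining proof obligation is either a non-temporal d$\mathcal{L}$ formula over simple programs, closed under the inherited d$\mathcal{L}$ calculus and the FOD oracle, yielding relative completeness.
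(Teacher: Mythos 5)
Your proposal follows essentially the same route as the paper: normalize the trace formula, observe that every temporal rule is an equivalence that either eliminates the temporal operator outright or pushes it under a strictly smaller decomposed program, and discharge the resulting non-temporal obligations via the relatively complete d$\mathcal{L}$ calculus over the first-order logic of differential equations. The paper's own proof is a brief sketch of exactly this reduction; your explicit well-founded lexicographic measure and your attention to the expressibility of the clock-augmented $timed(\alpha,q)$ programs make precise what the paper leaves implicit.
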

\begin{proof}
If we restrict \stdl to programs without repetition, the proof calculus for \stdl reduces temporal properties to non-temporal properties to leverage the calculus of d$\mathcal{L}$, which is proven to be complete relative to first-order logic of differential equations~\cite{platzer2008differential,platzer2012logics,Platzer2010}. More specifically, any temporal rule in the \stdl calculus transforms a normalized trace formula to a simpler normalized trace formula either without a temporal operator or with a temporal operator following a simpler, decomposed program. Every proof rule is an equivalence relation (i.e., the premise is equivalent to the conclusion), and Lemma~\ref{lemma:normalizedexistence} ensures that every trace formula in the syntax of \stdl can be converted into a normalized trace formula able to be handled by the \stdl calculus. Therefore, the relative completeness result of d$\mathcal{L}$ extends to \stdl limited to programs without repetition.
\end{proof}
Indeed, we conjecture that the \stdl proof calculus is complete relative to first-order logic of differential equations for all \stdl programs. We leave a formal proof of full relative completeness as future work.
\section{Future Work}\label{sec:future}
We plan on working on the following improvements to \stdl as future work:
\begin{itemize}
    \item \textit{Proving full relative completeness of \stdl:} While we prove that the calculus presented in \stdl restricted to programs without non-deterministic repetition is complete relative to first-order logic of differential equations, we conjecture that the calculus is indeed complete relative to first-order logic of differential equations for all programs. We have yet to prove this conjecture formally.
    \item \textit{Allowing for nested temporal operators in \stdl:} The fragment of STL currently supported by our work does not include properties with nested temporal operators, such as $\square_{[a,b]}\dia_{[c,d]}\phi$, to simplify the proof system. We do not consider this to be a significant drawback, since the fragment of STL considered is sufficient to cover a large amount of properties of interest expressed in previous case studies involving STL~\cite{boundedstl2019,raman2014model,jha2019telex,donze2012temporal,raman2015reactive}. Nevertheless, we hope to remove this restriction in the future to further increase the expressive power of \stdl.
    \item \textit{Implementing the rules for \stdl}: We hope to implement the rules for the \stdl proof system into a theorem prover for hybrid systems such as KeYmaera~\cite{platzer2008keymaera,fulton2015keymaera}.
\end{itemize}
\section{Related Work}\label{sec:relatedwork}
In this section, we explore works related to reasoning about properties of hybrid systems and using STL for monitoring and verfication purposes. 

STL~\cite{maler2004monitoring,maler2013monitoring} was introduced for monitoring properties over continuous signals, and has since been studied widely, e.g., in Deshmukh et al.~\cite{deshmukh2017robust}, Donz{\'e} and Maler~\cite{donze2010robust}, Maler et al.~\cite{maler2008checking}. Most uses of STL have been mainly for monitoring purposes. However, there has been some work done on studying temporal properties of hybrid systems in the context of model checking. Mysore et al.~\cite{mysore2005algorithmic} examine model checking of semi-algebraic hybrid systems for Timed Computation Tree Logic properties. Their work focuses on bounded model checking for differential equations with polynomial solutions only, while we allow for more general polynomial differential equations. Roehm et al.~\cite{roehm2016stl} define a new reachset temporal logic (RTL) and transform STL properties to RTL properties to perform model checking of continuous and hybrid systems. More recently, Bae and Lee~\cite{boundedstl2019} explore a bounded model checking of signal temporal logic properties using syntactic separation of STL. For both \cite{boundedstl2019} and \cite{mysore2005algorithmic}, the applications presented focus on bounded safety verification, while our work allows unbounded safety verification. Better still, our proof system enables proving strong liveness properties for hybrid systems, a trait not present in works like \cite{boundedstl2019}, \cite{roehm2016stl}, and \cite{mysore2005algorithmic}. 

Process logic~\cite{harel1982process,nishimura1980descriptively,pratt1979process} originally used Pnueli's temporal logic~\cite{pnueli1977temporal} in the context of Harel et al.'s dynamic logic~\cite{harel2001dynamic} for temporal reasoning of hybrid systems. However, it is restricted to discrete programs and only considers an abstract notion of atomic programs, without supporting explicit assignments and tests. Platzer~\cite{platzer2008differential,platzer2012logics,platzer2010logical} introduce differential dynamic logic (d$\mathcal{L}$) to reason about the end states of a hybrid program, later followed by differential temporal dynamic logic (dTL)~\cite{Platzer2010} to reason about intermediate states of hybrid programs throughout the execution of the program using some temporal operators of linear temporal logic. Jeannin and Platzer~\cite{jeannin2014dtl} present dTL\textsuperscript{2}, a logic that extends dTL and allows for alternating program and temporal modalities. While our work draws on the technical machinery from dTL\textsuperscript{2}, the logic has a significant drawback compared to \stdl in that it does not support reasoning about properties in given time intervals. This nature of reasoning not only is often crucial to proving safety of hybrid systems, but also allows for expressing a significantly richer set of liveness properties. 

Sogokon et al. present a proof method for proving eventuality properties~\cite{sogokon2015direct} and persistence properties~\cite{sogokon2017verifying} in hybrid systems. Their methods focus on properties of the form $\dia_{[0,t]}\square_{[0,\infty)}P$, whereas our formalism is more general but does not support alternating temporal modalities -- the properties that the two results focus on are complementary to each other.
Note, however, that their formalism operates on the level of hybrid automata~\cite{alur1992hybrid,henzinger2000theory}, which unlike hybrid programs, do not enjoy the property of having a compositional semantics that can be used to verify systems by verifying properties of their parts in a theorem prover.
Tan and Platzer~\cite{tan2020axiomatic} present an axiomatic approach for deductive verification of existence and liveness for ordinary differential equations with \dl, but their approach only focuses on liveness for differential equations, and not entire hybrid systems. They also only work on formulas of the form $\langle\alpha\rangle P$, which is a fairly limited form of liveness.

Zhou et al.~\cite{chaochen1992extended} present a duration calculus for hybrid real-time systems extended by mathematical expressions with derivatives of state variables. The system that they present requires external mathematical reasoning about continuity and derivatives. Davoren and Nerode~\cite{davoren2000logics} study hybrid systems in the context of the propositional $\mu$-calculus. They provide a calculus to prove formulas in their systems, but with a propositional system (and not a first-order one). Furthermore, they do not provide specific rules in their proof system to handle ordinary differential equations.
\section{Conclusion}\label{sec:conclusion}
In this work, we introduce signal temporal dynamic logic (\stdl), a logic that extends and combines differential dynamic logic (d$\mathcal{L}$) with a fragment of signal temporal logic (STL). \stdl is a conservative extension of d$\mathcal{L}$ and allows reasoning not only about the final states of a hybrid system, but also the intermediate states of a hybrid system in given time intervals. While STL was originally intended to be a logic for monitoring systems, and has widely been used for exactly that purpose, we show that STL can very well be used for deductive verification of hybrid systems. \stdl allows us to prove a greater set of both safety and liveness properties than was possible with logics preceding \stdl. We provide a semantics and a sound proof calculus for \stdl, along with proofs of soundness and relative completeness.

\subsection*{Acknowledgements}
The authors would like to thank Nikos Ar\'echiga for insightful discussions.
Toyota Research Institute (``TRI'') provided funds to assist the authors with their research, but this article solely reflects the opinions and conclusions of its authors and not TRI or any other Toyota entity.

\bibliographystyle{ACM-Reference-Format}
\bibliography{references}










\end{document}
\endinput